\newtheorem{theor}{Theorem}[section]
\newtheorem{lem}[theor]{Lemma}
\newtheorem{defin}[theor]{Definition}
\newtheorem{exam}[theor]{Example}
\newtheorem{rem}[theor]{Remark}
\numberwithin{equation}{section}
\newcommand{\mr}{\mathrm}
\newcommand{\es}{\emptyset}
\newcommand{\mcA}{\mathcal{A}}
\newcommand{\mfs}{\mathfrak{s}}
\newcommand{\mbE}{\mathbf{E}}
\newcommand{\mbW}{\mathbf{W}}
\newcommand{\mbX}{\mathbf{X}}
\newcommand{\mbY}{\mathbf{Y}}
\newcommand{\mbbP}{\mathbb{P}}
\newcommand{\mbbN}{\mathbb{N}}
\newcommand{\msfD}{\mathsf{D}}
\newcommand{\msfC}{\mathsf{C}}
\newcommand{\rng}{\mathrm{rng}}
\title[A convergence law for continuous logic]{A convergence law for continuous logic and continuous structures with finite domains}
\author{Vera Koponen}
\address{Vera Koponen, Department of Mathematics, Uppsala University, Sweden.}
\email{vera.koponen@math.uu.se}
\date{26 February, 2026}
\begin{document}

\begin{abstract}
We consider continuous relational structures with finite domain $[n] := \{1, \ldots, n\}$
and a many valued logic, $CLA$,  with values in the unit interval and which uses continuous connectives and 
continuous aggregation functions.
$CLA$ subsumes first-order logic on ``conventional'' finite structures. 
To each relation symbol $R$ and identity constraint $ic$ on a tuple the length of which matches the arity of $R$ we associate a 
continuous probability density function $\mu_R^{ic} : [0, 1] \to [0, \infty)$. 

We also consider a probability distribution on the set $\mbW_n$ of continuous structures with domain $[n]$ which is such that
for every relation symbol $R$, identity constraint $ic$, and tuple $\bar{a}$ satisfying $ic$, the distribution of 
the value of $R(\bar{a})$ is given by $\mu_R^{ic}$, independently of the values for other relation symbols or other tuples.

In this setting we prove that every formula in $CLA$ is asymptotically equivalent to a formula without any aggregation function.
This is used to prove a convergence law for $CLA$ which reads as follows for formulas without free variables:
If $\varphi \in CLA$ has no free variable and $I \subseteq [0, 1]$ is an interval, then there is $\alpha \in [0, 1]$ such that,
as $n$ tends to infinity, the probability that the value of $\varphi$ is in $I$ tends to $\alpha$.
\end{abstract}

\maketitle

\section{Introduction}

\noindent
Logical convergence laws have been proved, or disproved, in many different contexts since the seminal work of 
Glebskii et. al. \cite{Gle} in the late 1960'ies and, independently, Fagin \cite{Fag}.
The results have considered various kinds of structures (e.g. graphs, trees, relational structures, permutations),
various logics (e.g. first-order logic and extensions of it, many valued logics) and various probability distributions
(typically on a set of structures with a common finite domain).
A far from complete list, only meant to illustrate some of the diversity in the results, includes the following studies:
\cite{ABFN, Bal, Bur, HK, Hill, Kai, KL, KPR, KoVa, Lyn, McC, MT, SS}.
The notion of  ``almost sure (or asymptotic)'' elimination of quantifiers or of aggregation functions 
is closely related to the notion of logical convergence law and indeed in several cases a logical convergence law is obtained
as a consequence of almost sure (or asymptotic) elimination of quantifiers or of aggregation functions.

With few exceptions (mentioned below), 
the studies of logical convergence laws have considered ``ordinary'' structures in the sense that for every
relation symbol $R$, say of arity $\nu$, and any choice of elements $a_1, \ldots, a_\nu$ from the domain of a structure,
either $R(a_1, \ldots, a_\nu)$ is true (has value 1) or false (has value 0).
However, many phenomena are naturally described by continuously varying magnitudes.
Therefore much of the methodology in data science, machine learning and artificial intelligence considers continuously varying data and
continuous random variables.
For example, neural networks typically have continuous real valued inputs and outputs, 
and the output often comes together with a probability distribution.

This motivates making investigations about convergence laws for formal logics that ``talk about'' 
real valued relations
(from the logician's point of view) 
or random variables (from the probability theorist's point of view)
where each relation (random variable) comes together with a probability distribution.
To deal with real valued relations we will use the concept of continuous structure that has been studied
in continuous model theory for quite some time; see for example \cite{CK} or \cite[Sections 2--4]{BenY}.
However, continuous model theory focuses on continuous structures with infinite domains
because the main applications are in analysis and algebra, and here we consider finite domains.

Given a finite signature of relation symbols $\sigma$ and a finite domain, which can as well be $[n] := \{1, \ldots, n\}$ 
for some positive integer $n$, we can now let
$\mbW_n$ be the (uncountably infinite) set of all continuous $\sigma$-structures (or ``possible worlds'') with domain $[n]$
such that all relation symbols take values in the unit interval $[0, 1]$.
To each relation symbol $R \in \sigma$, of arity $\nu_R$ say, and tuple $\bar{a} \in [n]^{\nu_R}$ we can associate
a random variable $X_n^{R, \bar{a}} : \mbW_n \to [0, 1]$ such that for $\mcA \in \mbW_n$, 
$X_n^{R, \bar{a}}(\mcA) = R^{\mcA}(\bar{a})$, where $R^{\mcA}(\bar{a})$ is the value of $R(\bar{a})$ in $\mcA$.
To each $X_n^{R, \bar{a}}$ we will associate a continuous probability density function
$\mu_R^{ic} : [0, 1] \to [0, \infty)$ that depends only on $R$ and the identity constraints that are satisfied by $\bar{a}$,
represented by the notation $ic$, and such that $\mu_R^{ic}$ describes the distribution of the values of $X_n^{R, \bar{a}}$.
We will then consider the product space of all $X_n^{R, \bar{a}}$ as $R$ ranges over $\sigma$ and $\bar{a}$ ranges over
$[n]^{\nu_R}$. We let $\mbbP_n$ denote the associated probability distribution on $\mbW_n$, which will have 
the property that for all
$R \in \sigma$ and $\bar{a} \in [n]^{\nu_R}$, $X_n^{R, \bar{a}}$ is independent of all $X_n^{Q, \bar{a}}$ where
$Q \neq R$ or $\bar{a} \neq \bar{b}$.

The logic that will be used, which will be called {\em Continuous Logic with Aggregation functions}, or $CLA$ for short, 
is a many-valued logic with (truth) values in $[0, 1]$.
$CLA$ allows any continuous function $f : [0, 1]^k \to [0, 1]$ as a $k$-ary connective and it allows the use of any
aggregation function $F : \bigcup_{k=1}^\infty [0, 1]^k \to [0, 1]$ that is continuous according to 
Definition~\ref{definition of continuous aggregation function}.
In this sense the aggregation functions maximum, minimum, and average are continuous, so $CLA$ subsumes
first-order logic on continuous structures with finite domain.

\subsection*{Challenges and results, informally}

Let $\varphi(\bar{x})$ be a formula of $CLA$ where $\bar{x} = (x_1, \ldots, x_k)$ is a sequence of distinct free (object) variables.
Also let $n \in \mbbN^+$ (the set of positive integers), let $\bar{a} \in [n]^k$ and let $I \subseteq [0, 1]$.
The question now is how to compute, or estimate, preferably in an efficient way,
the probability that the value of $\varphi(\bar{a})$ in a random $\mcA \in \mbW_n$ is in $I$, and if this
probability converges as  $n$ tends to infinity. 
Recall that the space $\mbW_n$ of continuous $\sigma$-structures with domain $[n]$ is (uncountably) infinite and each
particular member of $\mbW_n$ has probability 0.

We will see that if a formula $\psi(\bar{x})$ (from $CLA$) does not use any aggregation function, then,
for every $\bar{a} \in [n]^{|\bar{x}|}$,
the probability that $\psi(\bar{a})$ belongs to $I$ is equal to an
integral that depends only on $\psi(\bar{x})$, $I$ and the distributions $\mu_R^{ic}$ as $R$ ranges over $\sigma$
and $ic$ ranges over the possible identity constraints that can be formed with $\nu_R$ variables.
In particular, the probability does not depend on the domain size.

Our main result (Theorem~\ref{main result}) is that 
if $\varphi(\bar{x})$ is a formula in $CLA$ and $ic(\bar{x})$ is an identity constraint,
then there is a formula $\psi(\bar{x})$ without any aggregation function, such that as the domain size $n$ tends to infinity,
the probability that $\varphi(\bar{a})$ and $\psi(\bar{a})$ have approximately the same values for all $\bar{a} \in [n]^k$
that satisfy $ic(\bar{x})$ tends to 1.
The proof shows how to ``eliminate'' aggregation functions from $\varphi(\bar{x})$, one by one, 
until one gets a formula $\psi(\bar{x})$ without any
aggregation function which ``almost surely''  gives ``almost the same'' value.
And as said above, the probability that the value of 
$\psi(\bar{a})$ belongs to an interval $I$ can be computed without reference to the domain size.
Thus we have a procedure for estimating the probability that the value of $\varphi(\bar{a})$ belongs to an interval $I$ 
which depends only on $\varphi(\bar{x})$, $I$ and the distributions of the form $\mu_R^{ic}$;
in particular, we have a convergence law for $CLA$.

\subsection*{A comparison with Probabilistic Logic Programming}

To see similarities and differences between the set-up of this article and a well-studied concept we make an informal comparison to 
{\em Probabilistic Logic Programming} (PLP) \cite{RS}, a formalism used in for example 
{\em Statistical Relational Artificial Intelligence}, a subfield of Artificial Intelligence (AI) and Machine Learning
which combines the logical and probabilistic approaches to AI; see e.g. \cite{BKNP, KMG, RKNP}.

Suppose that $P, Q$ and $R$ are relation symbols of arity 1.
A probabilistic logic program (PL-program) consists partly of a set of ``probabilistic facts'' which assign probabilities to some 
relations (atomic formulas).
The probabilistic facts could, for example, say that $P(x)$ and $Q(x)$
have probabilities $\alpha$ and $\beta$, respectively.
This is interpreted as meaning that for every grounding of $P(x)$, say $P(a)$ where $a$ is a member of some domain,
the probability that $P(a)$ holds is $\alpha$, independently of whether or not other ground atoms involving $P$ or $Q$ are true.
The other part of a PL-program is a set of rules, for example $(P(x) \wedge Q(x)) \rightarrow R(x)$, which is implicitly understood to
be universally quantified. 
So if the meaning of $P(x)$, $Q(x)$, and $R(x)$ is ``$x$ is wealthy'', ``$x$ is beautiful'', and
``$x$ is influencial'', respectively, 
then the program expresses that the probability that (a random) $x$ is influencial is (at least) $\alpha\cdot \beta$.

Rather than to consider wealth, beauty, and influentiality as binary properties we can consider the degree, on a continuous
scale, of wealth, beauty, and influence. We normalize the magnitudes so that for example the degree of wealth is always a
number in the unit interval. Suppose that continuous probability density functions
$\mu_P, \mu_Q : [0, 1] \to [0, \infty)$ are associated to $P$ and $Q$, respectively. 
We interpret this as meaning that for any grounding $P(a)$ of $P(x)$ the value of $P(a)$ is distributed
according to $\mu_P$, independently of the values of other ground atoms involving $P$ or $Q$.
Suppose that we have found a continuous function $\msfC : [0, 1]^2 \to [0, 1]$ such that the composition
$g(x) := \msfC(P(x), Q(x))$ gives a (more or less) reliable measure of the degree of influence of a random $x$. 
This means that $\msfC$ outputs the influence of an individual $x$ if it gets  the degrees of wealth and beauty of $x$ as input.
The expression $\msfC(P(x), Q(x))$ is a formula of $CLA$
which can be seen as a ``continuous rule'' which outputs the degree of influence of $x$.
In this context we do not need
an explicit symbol $R$ for ``influence''.

To complicate things, whether $x$ is influencial may not only depend on properties
of $x$ alone (wealth, beauty) but also on degrees of relationships that $x$ has with others.
Suppose that $E(x, y)$ expresses the degree to which $y$ is exposed to $x$ (e.g. in media) and that $f_E : [0, 1] \to [0, \infty)$
is a continuous probability density function, interpreted as saying that for every grounding $E(a, b)$,
where $a$ and $b$ are different, the degree to which $b$ is exposed to
$a$ is distributed according to $f_E$. 
Suppose that the ``degree of influence of $a$'' can be estimated (in a ``continuous way'') by knowing only
the degrees of wealth and beauty of $a$, and the average of the sequence of values of $E(a, b)$ as $b$ ranges over
other individuals of the domain.
With $CLA$ we can then express ``the degree of influence of $x$'' by a 
a formula $\varphi(x)$ which uses average to aggregate over the domain.

\subsection*{Related work}

It was mentioned above that almost all work on logical convergence laws have considered ``conventional'' structures
with true/false (or 1/0) valued relations.
One exception is \cite{Gra} by Grädel et. al. where
the authors prove logical convergence laws for first-order logic with semiring semantics, 
which partly means that grounded atomic formulas
have values in a semiring and for every grounded atomic formula either itself or its negation must have the value 0.
Another exception is \cite{Gol} by Goldbring et. al. who prove an ``approximate 0-1 law'' 
for finite metric spaces in which nontrivial distances are in the interval $[\frac{1}{2}, 1]$.
These metric spaces can be viewed as relational structures with a relation symbol of arity $2$ that takes (truth) values
in $[\frac{1}{2}, 1]$ for unordered pairs of distinct elements.
Finally, the featured graph neural networks considered by Adam-Day et. al. in \cite{Adam-Day1} and \cite{Adam-Day2} 
can be viewed as graphs expanded by some unary relation symbols that can take any nonnegative real values, because
the so-called node features can also be modeled by unary relation symbols that take nonnegative real values.

The last two references have a clear connection to machine learning and AI.
Other studies of logical convergence phenomena in contexts where the formal logic or the probability distribution
is inspired by concepts or methods in machine learning and AI include 
\cite{Jae98a, Jae98b, CM, Kop20, Wei21, KW1, KW2, Wei24, KT, Kop26}, in order of publication year.

\subsection*{Notation and terminology}

\noindent
We let $\mbbN$ denote the set of all nonnegative integers and $\mbbN^+$ the set of all positive integers.
For all $n \in \mbbN^+$, $[n] := \{1, \ldots, n\}$.
Logical (object) variables are denoted by $x$, $y$ or $z$, sometimes with subscripts, and finite sequences 
 of logical variables  are denoted by $\bar{x}$. 
The length of $\bar{x}$ is denoted by $|\bar{x}|$.
Real numbers will be denoted by $p, q$ or $r$, sometimes with subscripts, and finite sequences of reals by $\bar{p}$, $\bar{q}$ or
$\bar{r}$.
For a set $S$, $|S|$ denotes its cardinality.
Elements from some domain of a (continuous) structure are denoted by $a, b, c$, possibly by subscripts, and $\bar{a}$, $\bar{b}$ etc
denote finite sequences of such elements. By $\rng(\bar{a})$ we denote the set of all elements in $\bar{a}$.
If $\bar{a} = (a_1, \ldots, a_k)$ then $\bar{a}b$ denotes the sequence $(a_1, \ldots, a_k, b)$.

{\em For the rest of the article $\sigma$ denotes a nonempty finite set of relation symbols (in the sense of first-order logic) which 
we call a {\bf \em signature}.
We assume that each $R \in \sigma$ has {\bf \em arity} $\nu_R$.}

\section{Continuous Structures and Continuous Logic with Aggregation Functions}

\noindent
In this section we define the logical concepts that we will work with: continuous $\sigma$-structures, 
continuous aggregation functions, and continuous logic with aggregation functions.
We also prove some basic results about the concepts.

\begin{defin}{\rm
A {\bf \em continuous $\sigma$-structure} $\mcA$ is determined by the following information:
\begin{enumerate}
\item A nonempty set $A$, which we call the {\bf \em domain} of $\mcA$.
\item For every $R \in \sigma$ (with arity $\nu_R$), a function $R^\mcA : A^{\nu_R} \to [0, 1]$
(so $R^\mcA(\bar{a}) \in [0, 1]$ for all $\bar{a} \in A^{\nu_R}$).
\end{enumerate}
A continuous $\sigma$-structure is called {\bf \em finite} it its domain is finite.
}\end{defin}

\begin{defin}{\rm
(a) Let $[0, 1]^{<\omega}$ denote the set of all finite nonempty sequences of reals from the unit interval $[0, 1]$, or equivalently,
$[0, 1]^{<\omega} = \bigcup_{k = 1}^\infty [0, 1]^k$.\\
(b) A function $F : [0, 1]^{<\omega} \to [0, 1]$ will be called an {\bf \em aggregation function} if it is symmetric in the sense that
if $(r_1, \ldots, r_k) \in [0, 1]^k$ and $(q_1, \ldots, q_k)$ is a permutation (reordering) of $(r_1, \ldots, r_k)$ then
$F(r_1, \ldots, r_k) = F(q_1, \ldots, q_k)$.
}\end{defin}

\begin{exam}\label{examples of aggregation functions} {\rm 
For $\bar{p} = (p_1, \ldots, p_n) \in [0, 1]^{<\omega}$, define
\begin{enumerate}
\item $\max(\bar{p})$ ($\min(\bar{p})$) to be the {\em maximum} ({\em minimum}) of $p_1, \ldots, p_n$, and
\item $\mr{am}(\bar{p}) = (p_1 + \ldots + p_n)/n$.
\end{enumerate}
}\end{exam}

\noindent
Thus `am' is the {\em arithmetic mean}, or {\em average}.
Definition~\ref{definition of continuous aggregation function} below defines 
the notion of ``continuity'' for aggregation functions.
For an aggregation function $F : [0, 1]^{<\omega} \to [0, 1]$,
the intuition behind the technical part~(2) of 
Definition~\ref{definition of continuous aggregation function} is that,
for every $\varepsilon > 0$, 
if $(r_1, \ldots, r_n), (q_1, \ldots, q_m) \in [0, 1]^{<\omega}$ are sufficiently long sequences
that are sufficiently similar regarding the proportion of entries in sufficently small subintervals of $[0, 1]$ and if none of the
sequences have too large gaps (if the entries are ordered from smaller to larger), 
then $|F(r_1, \ldots, r_n) - F(q_1, \ldots, q_m)| \leq \varepsilon$.

\begin{defin}\label{definition of continuous aggregation function}{\rm
Let $F : [0, 1]^{<\omega} \to [0, 1]$ be an aggregation function.
We call $F$ {\bf \em continuous} if the following two conditions hold:
\begin{enumerate}
\item For every $\varepsilon > 0$ there is $\delta > 0$ such that for all $n$,
if $(q_1, \ldots, q_n)$, $(q'_1, \ldots, q'_n)$ $\in [0, 1]^n$ and $|q_i - q'_i| \leq \delta$ for all $i = 1, \dots, n$,
then $|F(q_1, \ldots, q_n) - F(q'_1, \ldots, q'_n)| \leq \varepsilon$.

\item For every $\varepsilon > 0$ there are $\delta > 0$ and $M, N \in \mbbN^+$
such that if $\alpha_0, \ldots, \alpha_{M-1} \in [0, 1]$ and 
$(q_1, \ldots, q_n), (q'_1, \ldots, q'_m) \in [0, 1]^{<\omega}$ (where we may have $n \neq m$) are such that
conditions (a)--(d) below hold, then $|F(q_1, \ldots, q_n) - F(q'_1, \ldots, q'_m)| \leq \varepsilon$:
\begin{enumerate}
\item $n, m \geq N$, 

\item for all $i = 0, \ldots, M-1$, if $\alpha_i > 0$ then $\alpha_i > \delta$,
and (regardless of whether $\alpha_i > 0$ or not)
\begin{align*}
&\frac{\big|\big\{l : q_l \in \big[\frac{i}{M}, \frac{i+1}{M}\big] \big\}\big|}{n} \in (\alpha_i - \delta, \alpha_i + \delta), \\
&\frac{\big|\big\{l : q'_l \in \big[\frac{i}{M}, \frac{i+1}{M}\big] \big\}\big|}{m} \in (\alpha_i - \delta, \alpha_i + \delta), 
\end{align*}

\item  if $i < j < k$ or $i > j > k$, and in addition $\alpha_i > 0$ and $\alpha_j =  0$, then $\alpha_k = 0$, and 

\item if $i < j < k$ or $i > j > k$, and in addition $\alpha_i = \alpha_j = 0$ and $\alpha_k > 0$, then 
\begin{align*}
\{l : q_l \in \big[ {\scriptstyle \frac{i}{M}, \frac{i+1}{M} } \big] \big\} = 
\{l : q'_l \in \big[ {\scriptstyle\frac{i}{M}, \frac{i+1}{M} } \big] \big\} = \es, 
\end{align*}
\end{enumerate}
\end{enumerate}

}\end{defin}

\begin{lem}
The aggregation functions maximum, minimum and arithmetic mean are continuous.
\end{lem}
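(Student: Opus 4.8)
The plan is to verify the two conditions of Definition~\ref{definition of continuous aggregation function} separately for each of the three aggregation functions; condition~(1) (uniform continuity in the sup-metric, uniformly in the length $n$) is easy in all three cases, so the real content is condition~(2). For maximum and minimum, condition~(1) is immediate since $|\max(\bar q)-\max(\bar q')|\le\max_i|q_i-q_i'|$ and likewise for $\min$, so one may take $\delta=\varepsilon$. For the arithmetic mean, $|\mr{am}(\bar q)-\mr{am}(\bar q')|\le\frac1n\sum_i|q_i-q_i'|\le\max_i|q_i-q_i'|$, so again $\delta=\varepsilon$ works. Thus condition~(1) holds for all three with no dependence on $n$ at all.

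For condition~(2) I would argue as follows. Fix $\varepsilon>0$. Choose $M\in\mbbN^+$ with $1/M<\varepsilon/2$ (for the mean one wants something like $1/M<\varepsilon/4$ to leave room for a second error term), pick $\delta\in(0,1/M)$ small, and set $N$ large (its precise role matters only for the mean). Given $\alpha_0,\dots,\alpha_{M-1}$ and sequences $(q_1,\dots,q_n)$, $(q'_1,\dots,q'_m)$ satisfying (a)--(d): for \emph{maximum}, let $i^\ast$ be the largest index with $\alpha_{i^\ast}>0$. By~(b), since $\alpha_{i^\ast}>\delta>0$, the bin $[i^\ast/M,(i^\ast+1)/M]$ contains a positive proportion (hence at least one entry) of \emph{both} sequences, so $\max(\bar q)\ge i^\ast/M$ and $\max(\bar q')\ge i^\ast/M$. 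For any index $j>i^\ast$ we have $\alpha_j=0$, and by~(b) again the proportion of entries of $\bar q$ in $[j/M,(j+1)/M]$ is less than $\alpha_j+\delta=\delta$; summing over the finitely many bins above $i^\ast$ shows the count of such entries is at most $(M-i^\ast-1)\delta n$, which I make $<1$ by shrinking $\delta$ — wait, that is not quite legitimate since $n$ is unbounded, so instead I invoke clause~(d): the bins strictly above $i^\ast$ that lie above a run of zero-$\alpha$ bins must be \emph{empty} in both sequences. Concretely, by~(c) the set $\{i:\alpha_i>0\}$ is a down-set below $i^\ast$ together with possibly $i^\ast$ itself; combined with~(d), the bins with index $>i^\ast$ are all forced empty in both $\bar q$ and $\bar q'$. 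Hence $\max(\bar q),\max(\bar q')\in[i^\ast/M,(i^\ast+1)/M]$, so they differ by at most $1/M<\varepsilon$. The argument for \emph{minimum} is the symmetric one, using the \emph{smallest} index $i_\ast$ with $\alpha_{i_\ast}>0$ and the mirror-image instances of (c) and (d) (with $i>j>k$).

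For the \emph{arithmetic mean}, write $\mr{am}(\bar q)=\sum_{i=0}^{M-1}\frac{c_i}{n}\cdot(\text{average of }\bar q\text{-entries in bin }i)$ where $c_i$ is the number of $\bar q$-entries in $[i/M,(i+1)/M]$; approximating each bin-average by $i/M$ costs at most $1/M$, so $\big|\mr{am}(\bar q)-\sum_{i}\frac{c_i}{n}\cdot\frac iM\big|\le 1/M$, and similarly for $\bar q'$ with counts $c_i'$. Then $\big|\sum_i\frac{c_i}{n}\cdot\frac iM-\sum_i\frac{c_i'}{m}\cdot\frac iM\big|\le\sum_i\big|\frac{c_i}{n}-\frac{c_i'}{m}\big|\cdot\frac iM\le\sum_i|\tfrac{c_i}{n}-\alpha_i|+|\alpha_i-\tfrac{c_i'}{m}|<2M\delta$ by~(b). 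Putting the pieces together gives $|\mr{am}(\bar q)-\mr{am}(\bar q')|\le 2/M+2M\delta$; choosing first $M$ with $2/M<\varepsilon/2$ and then $\delta<\varepsilon/(4M)$ makes this $<\varepsilon$. Here $N$ plays essentially no role for the mean, but one may set $N=1$; the binning estimate needs nothing beyond the proportion bounds in~(b). The one point that needs care — and that I expect to be the main obstacle — is the maximum/minimum case: one must resist the temptation to bound stray high-bin entries by a ``small fraction of $n$'' argument (illegitimate because $n$ is unbounded) and instead see that clauses~(c) and~(d) of Definition~\ref{definition of continuous aggregation function} are precisely designed to force those bins to be literally empty, which is what pins $\max$ (resp.\ $\min$) down to a single bin of width $1/M$ in both sequences simultaneously.
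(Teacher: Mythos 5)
Your overall strategy (condition (1) is immediate; for condition (2) localize $\max$/$\min$ using the bins and clauses (b)--(d), and estimate the mean bin by bin) is the same as the paper's, and your treatment of condition (1) and of the arithmetic mean is essentially fine. But the key step of your maximum argument is wrong as stated: clauses (c) and (d) do \emph{not} force every bin with index $>i^\ast$ to be empty. Clause (d) empties bin $i$ only when there exist $j,k$ with $i>j>k$, $\alpha_i=\alpha_j=0$ and $\alpha_k>0$; for $i=i^\ast+1$ no such $j$ exists, because by (c) the support $\{i:\alpha_i>0\}$ is a contiguous interval ending at $i^\ast$ (not a down-set, incidentally), so any $j\le i^\ast$ with $\alpha_j=0$ lies below the support and then every $k<j$ also has $\alpha_k=0$. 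Concretely, take $\alpha_0=1$ and $\alpha_i=0$ otherwise, let $\bar q$ have $n-1$ entries in $[0,1/M]$ and one stray entry just below $2/M$, and let $\bar q'$ have all entries equal to $0$: all of (a)--(d) hold (the stray entry has proportion $1/n<\delta$ for large $n$), yet $\max(\bar q)\approx 2/M$ while $\max(\bar q')=0$. So your conclusion that both maxima lie in $[i^\ast/M,(i^\ast+1)/M]$, and hence your bound $1/M$, is false -- and this is exactly the point you single out as the crux.

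The damage is repairable: (d) does empty every bin of index $\ge i^\ast+2$ in both sequences (take $j=i-1$, $k=i^\ast$), and (b) together with the stipulation $\alpha_{i^\ast}>\delta$ puts at least one entry of each sequence in bin $i^\ast$, so both maxima lie in $[i^\ast/M,(i^\ast+2)/M]$ and differ by at most $2/M$ -- which is precisely the bound the paper obtains, and which still beats $\varepsilon$ with your choice $1/M<\varepsilon/2$. With that correction your argument becomes essentially the paper's. One further small slip: for the minimum the relevant instances of (d) (and (c)) are the increasing ones $i<j<k$, since the stray bins now sit below the support, not $i>j>k$ as you write.
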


\noindent
{\bf Proof.}
We start with maximum.
It is immediately clear that max satisfies condition~(1) of Definition~\ref{definition of continuous aggregation function}
so we only verify that~(2) of the same definition holds.
Let $\varepsilon > 0$.
Suppose that 
conditions (a)--(d) of part~(2) of 
Definition~\ref{definition of continuous aggregation function}
are satisfied by $\delta > 0$, $M, N \in \mbbN^+$,
$\alpha_0, \ldots, \alpha_{M-1} \in [0, 1]$, and 
$(q_1, \ldots, q_n), (q'_1, \ldots, q'_m)$.
For $i = 0, \ldots, M-1$, let $P_i = \big\{l : q_l \in \big[\frac{i}{M}, \frac{i+1}{M}\big] \big\}$
and  $P'_i = \big\{l : q'_l \in \big[\frac{i}{M}, \frac{i+1}{M}\big] \big\}$.
Let $s$ be maximal such that $P_s \neq \es$.
Then $s/M \leq \max(\bar{q}) \leq (s+1)/M$.

Suppose for a contradiction that $i > s$ and $\alpha_i > 0$.
By assumption (2)(a) (of Definition~\ref{definition of continuous aggregation function}) we
get $\alpha_i > \delta$, so $P_i \neq \es$ by assumption (2)(b) and this contradicts the choice of $s$.
We conclude that $\alpha_i = 0$ for all $i = s+1, \ldots, M-1$.

Assuming that $\delta > 0$ is small enough we have $\alpha_0 + \ldots + \alpha_{M-1} \approx 1$ so there is 
$i \leq s$ such that $\alpha_i > 0$. Let $t$ be maximal such that $\alpha_t > 0$; hence $t \leq s$.
By~(2)(d) we have $P'_i = \es$ for all $i > s+1$.
Hence $\max(\bar{q}') \leq (s+2)/M$.

Suppose, for a contradiction, that $P'_{s-1} = \es$.
Then $\alpha_{s-1} = 0$,
because $\alpha_{s-1} > 0$ would imply $\alpha_{s-1} > \delta$ and hence $P'_{s-1} \neq \es$.
If $\alpha_s > 0$ then  (2)(c) implies that $\alpha_i = 0$ for all $i < s$ which contradicts that $t \leq s$
and $\alpha_t > 0$.
Thus $\alpha_{s-1} = \alpha_s = 0$, $\alpha_t > 0$, and $t < s-1 < s$, so (2)(d) implies
that $P_s = \es$ which contradicts the choice of $s$.

Hence we conclude that $P'_{s-1} \neq \es$ and we get
$(s-1)/M \leq \max(\bar{q}') \leq (s+2)/M$.
It follows that $|\max(\bar{q}) - \max(\bar{q}')| \leq 2/M$.
So if $\delta > 0$ is chosen small enough, $M$ and $N$ large enough,
and conditions~(2)(a)--(2)(d) hold, then 
 $|\max(\bar{q}) - \max(\bar{q}')| \leq \varepsilon$.

The proof that minimum is continuous is similar, so we next consider the arithmetic mean (average), abbreviated `am'.
It is again clear that am has property~(1) of Definition~\ref{definition of continuous aggregation function},
so we only verify property~(2).
Suppose that $\bar{q} = (q_1, \ldots, q_n)$ and $\bar{q}' = (q'_1, \ldots, q'_m)$ 
satisfy conditions (a)--(d) of part~(2) of 
Definition~\ref{definition of continuous aggregation function}.
Then 
\[
\sum_{i=0}^{M-1} \frac{i}{M}(\alpha_i - \delta) \leq \mr{am}(\bar{q}) \leq 
\sum_{i=0}^{M-1} \frac{i+1}{M}(\alpha_i + \delta)
\]
and similarly if $\bar{q}$ is replaced by $\bar{q}'$.
Hence $|\mr{am}(\bar{q}) - \mr{am}(\bar{q}')| \leq 
\sum_{i=0}^{M-1} \frac{\alpha_i + 2i\delta + \delta}{M}$, and if $\delta$ is small enough 
(relative to $M$) then this is approximately $\sum_{i=0}^{M-1} \frac{\alpha_i}{M}$.
For sufficiently small $\delta$, the sum $\alpha_0 + \ldots + \alpha_{M-1}$ is approximately 1,
so $\sum_{i=0}^{M-1} \frac{\alpha_i}{M} \approx \frac{1}{M}$ which can be made as small as we like if $M$ is chosen
large enough.
\hfill $\square$

\medskip

\noindent
One may note that in the context of true/false-valued relations studied in \cite{KW2} 
the aggregation functions maximum and minimum are not continuous (also called ``strongly admissible'' in \cite{KW2})
according to the definitions used there 
(but only ``semi-continuous'', or ``admissible'').

\begin{defin}\label{syntax of CLA}{\bf (Syntax of $CLA$)} {\rm 
We define $CLA$, the set of formulas of {\bf \em continuous logic with aggregation functions}, as follows,
where for every $\varphi \in CLA$ we simultaneously define the set of its free variables, denoted $Fv(\varphi)$:
\begin{enumerate}
\item  For each $c \in [0, 1]$, $c \in CLA$ and $Fv(c) = \es$. 

\item For all logical variables $x$ and $y$, `$x = y$' belongs to $CLA$ and $Fv(x = y) = \{x, y\}$.

\item For every $R \in \sigma$ (of arity $\nu_R$) and any choice of logical variables $x_1, \ldots, x_{\nu_R}$, $R(x_1, \ldots, x_{\nu_R})$ belongs to 
$CLA$ and  $Fv(R(x_1, \ldots, x_{\nu_R})) = \{x_1, \ldots, x_{\nu_R}\}$.

\item If $k \in \mbbN^+$, $\varphi_1, \ldots, \varphi_k \in CLA$ and
$\msfC : [0, 1]^k \to [0, 1]$ is a continuous function, then the expression
$\msfC(\varphi_1, \ldots, \varphi_k)$ is a formula of $CLA$ and
its set of free variables is $Fv(\varphi_1) \cup \ldots \cup Fv(\varphi_k)$.

\item Suppose that $\varphi \in CLA$, $y$ is a logical variable, and that $F : [0, 1]^{<\omega}  \to [0, 1]$ is a
continuous aggregation function.
Then the expression $F(\varphi : y)$
is a formula of $CLA$ and its set of free variables is $Fv(\varphi) \setminus \{y\}$.
Hence this construction binds the variable $y$.
\end{enumerate}
}\end{defin}

\noindent
When denoting a formula (of $CLA$) by $\varphi(\bar{x})$ (or $\psi(\bar{x})$ etc), 
{\em it will be assumed that all free variables of
$\varphi(\bar{x})$ appear in the sequence $\bar{x}$} (but we do not insist that every variable in $\bar{x}$ appears in 
$\varphi(\bar{x})$).
Formulas of the forms 1, 2 and 3 in Definition~\ref{syntax of CLA} will be called {\bf \em atomic}.

\begin{defin}\label{semantics of CLA}{\bf (Semantics of $CLA$)} {\rm
For every $\varphi \in CLA$ and every sequence of distinct logical variables $\bar{x} = (x_1, \ldots, x_k)$ such that 
$Fv(\varphi) \subseteq \{x_1, \ldots, x_k\}$ we associate a mapping from pairs $(\mcA, \bar{a})$,
where $\mcA$ is a finite continuous $\sigma$-structure and $\bar{a} = (a_1, \ldots, a_k) \in A^k$, to $[0, 1]$.
The number in $[0, 1]$ to which $(\mcA,\bar{a})$ is mapped is denoted $\mcA(\varphi(\bar{a}))$
and is defined by induction on the complexity of formulas, as follows:
\begin{enumerate}
\item If $\varphi(\bar{x})$ is a constant $c$ from $[0, 1]$, then $\mcA(\varphi(\bar{a})) = c$.

\item If $\varphi(\bar{x})$ has the form $x_i = x_j$, then $\mcA(\varphi(\bar{a})) = 1$ if $a_i = a_j$,
and otherwise $\mcA(\varphi(\bar{a})) = 0$.

\item For every $R \in \sigma$, if $\varphi(\bar{x})$ has the form $R(x_{i_1}, \ldots, x_{i_{\nu_R}})$,
then $\mcA(\varphi(\bar{a})) = R^\mcA(a_{i_1}, \ldots, a_{i_{\nu_R}})$.

\item If $\varphi(\bar{x})$ has the form $\msfC(\varphi_1(\bar{x}), \ldots, \varphi_k(\bar{x}))$,
where $\msfC : [0, 1]^k \to [0, 1]$ is continuous, then
\[
\mcA\big(\varphi(\bar{a})\big) \ = \ 
\msfC\big(\mcA(\varphi_1(\bar{a})), \ldots, \mcA(\varphi_k(\bar{a}))\big).
\]

\item Suppose that $\varphi(\bar{x})$ has the form 
$F(\psi(\bar{x}, y) : y)$
where $y$ is a logical variable that does not occur in $\bar{x}$
and $F : [0, 1]^{<\omega} \to [0, 1]$ is a continuous aggregation function.
Then 
\[
\mcA\big(\varphi(\bar{a})\big) = F(r_1, \ldots, r_n)
\]
where, for $i = 1, \ldots, n$, $r_i = \mcA\big(\psi(\bar{a}, b_i)\big)$ and $b_1, \ldots, b_n$ is an enumeration (without repetition)
of all elements in $A \setminus \rng(\bar{a})$ where $A$ is the domain of $\mcA$.
\end{enumerate}
}\end{defin}

\begin{rem}\label{remark on FO}{\rm 
(i) Note that  we have $\mcA(R(\bar{a})) = R^\mcA(\bar{a})$
for every $R \in \sigma$, every continuous $\sigma$-structure $\mcA$, and every $\bar{a} \in A^{\nu_R}$.

(ii) The classical connectives $\neg$, $\wedge$, $\vee$ and $\rightarrow$, which can be seen as functions from
$\{0, 1\}$ or $\{0, 1\}^2$ to $\{0, 1\}$, can be extended, for example by Lukasiewicz semantics \cite{LT}, to continuous functions from
$[0, 1]$ or $[0, 1]^2$ to $[0, 1]$, so these connectives can be expressed with $CLA$.
Likewise, the aggregation functions max and min can be used to express existential and universal quantifications,
so $CLA$ subsumes first-order logic on ``ordinary'' $\sigma$-structures $\mcA$ in which $R^\mcA(\bar{a})$ is either 0 or 1 for all 
$R \in \sigma$ and $\bar{a} \in A^{\nu_R}$.
This may not be immediately clear because according to the semantics of $CLA$, 
the construction $F(\varphi(x_1, \ldots, x_k, y) : y)$ only aggregates only over $y$
that are different from all $x_1, \ldots, x_k$. 
But the first-order construction $\exists y \varphi(x_1, \ldots, x_k, y)$ can be expressed
in $CLA$ by 
\[
\mr{max}_{k+1}\big(\varphi(x_1, \ldots, x_k, x_1), \ldots, \varphi(x_1, \ldots, x_k, x_k), \max(\varphi(x_1, \ldots, x_k, y) : y)\big)
\]
where $\max_{k+1} : [0, 1]^{k+1} \to [0, 1]$ is the continuous function which outputs the largest entry in a $(k+1)$-tuple.
}\end{rem}

\begin{defin}
(a) A formula of $CLA$ is called {\bf \em aggregation-free} if it contains no aggregation function, that is, if
it was constructed by only using parts 1 -- 4 of Definition~\ref{syntax of CLA}.\\
(b) The formulas $\varphi(\bar{x}), \psi(\bar{x}) \in CLA$ are {\bf \em equivalent} if for every finite continuous $\sigma$-structure
$\mcA$ and every $\bar{a} \in A^{|\bar{x}|}$, $\mcA(\varphi(\bar{a})) = \mcA(\psi(\bar{a}))$.
\end{defin}

\noindent
The next results allows us to simplify aggregation-free formulas to such which use exactly one connective
(i.e. continuous function $\msfC : [0, 1]^k \to [0, 1]$ for some $k$), and exactly once.

\begin{lem}\label{simplifications of aggregation-free formulas}
If $\psi(\bar{x}) \in CLA$ is aggregation-free then, for some $m$, it is equivalent to a formula 
$\msfC(\psi_1(\bar{x}), \ldots, \psi_m(\bar{x}))$ where $\msfC : [0, 1]^m \to [0, 1]$ is continuous, for all $i$, $\psi_i(\bar{x})$
is atomic, and if $i \neq j$ then $\psi_i(\bar{x})$ and $\psi_j(\bar{x})$ are different formulas.
\end{lem}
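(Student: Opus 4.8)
The plan is to prove the statement by induction on the construction of aggregation-free formulas according to parts 1--4 of Definition~\ref{syntax of CLA}, with the twist that the inductive hypothesis must be carried in the strong form: every aggregation-free $\psi(\bar x)$ is equivalent to some $\msfC(\psi_1(\bar x),\ldots,\psi_m(\bar x))$ with $\msfC:[0,1]^m\to[0,1]$ continuous, each $\psi_i(\bar x)$ atomic, and the $\psi_i(\bar x)$ pairwise distinct. The key observation is that an atomic formula $\psi_i(\bar x)$, for a fixed $\bar x$, is one of only finitely many syntactic objects: it is either a constant $c\in[0,1]$, an equality $x_s=x_t$ with $s,t$ indices into $\bar x$, or $R(x_{i_1},\ldots,x_{i_{\nu_R}})$ with each $i_j$ an index into $\bar x$ and $R\in\sigma$ (and $\sigma$ finite). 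So ``collapsing repeated atoms'' really is a finite bookkeeping operation.

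First I would handle the base cases: if $\psi(\bar x)$ is itself atomic, take $m=1$, $\psi_1=\psi$, and let $\msfC:[0,1]\to[0,1]$ be the identity, which is continuous. For the inductive step, suppose $\psi(\bar x)=\msfD(\chi_1(\bar x),\ldots,\chi_k(\bar x))$ where $\msfD:[0,1]^k\to[0,1]$ is continuous and each $\chi_j(\bar x)$ is aggregation-free of lower complexity. By the inductive hypothesis, each $\chi_j(\bar x)$ is equivalent to $\msfC_j(\psi_{j,1}(\bar x),\ldots,\psi_{j,m_j}(\bar x))$ with the asserted properties. Substituting, $\psi(\bar x)$ is equivalent to
\[
\msfD\big(\msfC_1(\psi_{1,1},\ldots,\psi_{1,m_1}),\,\ldots,\,\msfC_k(\psi_{k,1},\ldots,\psi_{k,m_k})\big),
\]
which, by part~4 of Definition~\ref{syntax of CLA} applied to the composition $\msfD\circ(\msfC_1\times\cdots\times\msfC_k):[0,1]^{m_1+\cdots+m_k}\to[0,1]$ (continuous as a composition of continuous functions), is a single connective applied to the concatenated list of atomic formulas. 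It remains only to remove duplicates: enumerate the distinct atomic formulas occurring in that concatenated list as $\psi_1(\bar x),\ldots,\psi_m(\bar x)$, record for each position in the old list which $\psi_i$ it equals via a map $\rho:\{1,\ldots,\sum m_j\}\to\{1,\ldots,m\}$, and define $\msfC(r_1,\ldots,r_m):=(\msfD\circ(\msfC_1\times\cdots\times\msfC_k))(r_{\rho(1)},\ldots,r_{\rho(\sum m_j)})$. This $\msfC$ is continuous because it is the original continuous function precomposed with a coordinate projection, and by the semantics clause~4 of Definition~\ref{semantics of CLA} the value of the new formula at any $(\mcA,\bar a)$ agrees with that of the old one, since $\mcA(\psi_i(\bar a))$ is well-defined and equal for all syntactically identical atoms.

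I do not expect a serious obstacle here; the proof is essentially routine induction plus careful notation. The one point requiring a little care is the duplicate-elimination step: one must check that two atomic formulas being the \emph{same formula} (as syntactic objects, in the fixed variables $\bar x$) indeed forces $\mcA(\psi_i(\bar a))=\mcA(\psi_j(\bar a))$ for every $(\mcA,\bar a)$ — this is immediate from Definition~\ref{semantics of CLA}(1)--(3), but it is what licenses feeding a single coordinate into several input slots of the connective. A second minor point is ensuring that when some $\chi_j$ or the final list is empty of non-constant atoms, or when $\bar x$ contains variables not actually occurring, nothing breaks: constants are atomic by part~1 of Definition~\ref{syntax of CLA}, so the list $\psi_1,\ldots,\psi_m$ is always nonempty (one can always pad with the constant $0$ if desired, or simply note $k\geq 1$ in part~4 guarantees $m\geq 1$). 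Once these are dispatched, the lemma follows.
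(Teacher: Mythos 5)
Your proposal is correct and follows essentially the same route as the paper: induction on the construction of aggregation-free formulas, composing the continuous connectives in the inductive step, and then removing repeated atomic subformulas by precomposing with a continuous map that duplicates coordinates. The only (cosmetic) difference is that you eliminate all duplicates at once via the map $\rho$, whereas the paper removes them one at a time by identifying the last coordinate with an earlier one and iterating.
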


\noindent
{\bf Proof.}
We use induction on the complexity of aggregation-free formulas.
Suppose that $\psi(\bar{x})$ is aggregation-free.
If $\psi(\bar{x})$ is atomic then it is equivalent to $\msfC(\psi(\bar{x}))$ where $\msfC : [0, 1] \to [0, 1]$ is the identity function.

Now suppose that $\psi(\bar{x})$ has the form $\msfD(\psi_1(\bar{x}), \ldots, \psi_t(\bar{x}))$ where
$\msfD : [0, 1]^t \to [0, 1]$ is continuous. 
By the induction hypothesis, for every $i = 1, \ldots, t$, $\psi_i(\bar{x})$ is equivalent to a formula of the form
$\msfD_i(\psi_{i, 1}(\bar{x}), \ldots, \psi_{i, s_i}(\bar{x}))$ where each $\psi_{i, j}(\bar{x})$ is atomic
and $\msfD_i : [0, 1]^{s_i} \to [0, 1]$ is continuous.
Now let $m = s_1 + \ldots + s_t$ and let $\msfC : [0, 1]^m \to [0, 1]$ be the composition defined by
\[
\msfC(r_1, \ldots, r_m) = 
\msfD(\msfD_1(r_1, \ldots, r_{s_1}), \ldots, \msfD_t(r_{m - s_t}, \ldots, r_m)),
\]
so $\msfC$ is continuous.
Then $\psi(\bar{x})$ is equivalent to 
\[
\msfC(\psi_{1, 1}(\bar{x}), \ldots, \psi_{1, s_1}(\bar{x}), \ldots, \psi_{t, 1}(\bar{x}), \ldots, \psi_{t, s_t}(\bar{x})).
\]
To simplify notation let us rename the formulas $\psi_{i, j}$ so that $\psi(\bar{x})$ is equivalent
to $\msfC(\psi_1(\bar{x}), \ldots, \psi_m(\bar{x}))$. Suppose that for $1 \leq i < j \leq m$, $\psi_i(\bar{x})$ is the same formula
as $\psi_j(\bar{x})$. For notational simplicity and without loss of generality, suppose that $j = m$.
Then define $\msfC' : [0, 1]^{m-1} \to [0, 1]$ by $\msfC'(r_1, \ldots, r_{m-1}) = \msfC(r_1, \ldots, r_{m-1}, r_{m-1})$.
Then $\msfC'$ is continuous and $\msfC'(\psi_1(\bar{x}), \ldots, \psi_{m-1}(\bar{x}))$ is equivalent to 
$\msfC(\psi_1(\bar{x}), \ldots, \psi_m(\bar{x}))$.
By continuing in this way as long as necessary we can make sure to get a formula like
$\msfC(\psi_1(\bar{x}), \ldots, \psi_m(\bar{x}))$ that satisfies the additional condition that $\psi_i(\bar{x})$ 
is different from $\psi_j(\bar{x})$ if $i \neq j$.
\hfill $\square$

\medskip

\noindent
The idea with an identity constraint, that we define next, 
is that it specifies which pairs of elements in a tuple/sequence are equal and which are not.

\begin{defin}\label{definition of identity constraint}{\rm
Let $\bar{x} = (x_1, \ldots, x_k)$ be a sequence of (not necessarily distinct) variables.\\
(i) Let $K = \{(i, j) : 1 \leq i < j \leq k\}$. An {\bf \em identity constraint for $\bar{x}$} is a first-order formula of the following form for 
some $I \subseteq K$: 
$\bigwedge_{(i, j) \in I} x_i = x_j \ \wedge \ \bigwedge_{(i, j) \in K \setminus I} x_i \neq x_j$.\\
(ii) Let $ic(\bar{x})$ be an identity constraint for $\bar{x}$. 
Suppose that $\bar{a} = (a_1, \ldots, a_k)$ where $a_1, \ldots, a_k$ may be any objects.
We say that $\bar{a}$ {\bf \em satifies} $ic(\bar{x})$ if for all $1 \leq i < j \leq k$,
$a_i = a_j$ if and only if $ic(\bar{x}) \models x_i = x_j$, where `$\models$' has the usual meaning for first-order logic.\\
(iii) Suppose that $ic(\bar{x})$ is an identity constraint for $\bar{x}$ and let $\bar{x}'$ be a sequence of variables such that
$\rng(\bar{x}') \subseteq \rng(\bar{x})$. The {\bf \em restriction of $ic(\bar{x})$ to $\bar{x}'$} is an identity constraint $ic'(\bar{x}')$
for $\bar{x}'$ such that $ic(\bar{x}) \wedge ic'(\bar{x}')$ is consistent (or equivalently, such that $ic'(\bar{x}')$
is a logical consequence of $ic(\bar{x})$). 
Note that, syntactically, the restriction of $ic(\bar{x})$ to $\bar{x}'$ is not unique, but it is unique up to logical equivalence,
so as {\em the} restriction we can just choose one of the logically equivalent candidates.
}\end{defin}

\noindent
{\em Observe that an identity constraint $ic(\bar{x})$ may be inconsistent (as it may violate the transitivity of `$=$',
or the reflexivity of `$=$' if $x_i$ and $x_j$ are the same variable for some $i \neq j$)
but every result that will follow and mentions an identity constraint is vacuously true if the identity constraint is inconsistent.}
To allow repetitions in the sequence $\bar{x}$ may seem odd at this point but it will be technically convenient later.
Informally, the next lemma says that for every aggregation-free formula $\varphi(\bar{x})$ and identity constraint $ic(\bar{x})$
there is an aggregation-free formula $\psi(\bar{x})$ which uses only atomic formulas of the form (3) in 
Definition~\ref{syntax of CLA} and such that, conditioned on $\bar{a}$ satisfying $ic(\bar{x})$, $\varphi(\bar{a})$ and $\psi(\bar{a})$
have he same value.

\begin{lem}\label{reduction to simpler formula}
Let $\bar{x} = (x_1, \ldots, x_k)$ be a sequence of distinct variables, 
let $ic(\bar{x})$ be an identity constraint for $\bar{x}$, 
and let $\varphi(\bar{x})$ be an aggregation-free formula.
Then there is an aggregation-free formula $\psi(\bar{x})$ such that 
\begin{enumerate}
\item if $\mcA$ is a finite continuous $\sigma$-structure and $\bar{a} \in A^k$ satisfies $ic(\bar{x})$, then
$\mcA(\varphi(\bar{a})) = \mcA(\psi(\bar{a}))$, and
\item $\psi(\bar{x})$ is either a constant 
(i.e. of the form (1) in Definition~\ref{syntax of CLA})
or $\psi(\bar{x})$ has the form
$\msfC(R_1(\bar{x}_1), \ldots, R_m(\bar{x}_m))$
where $\msfC : [0, 1]^m \to [0, 1]$ is continuous, for each 
$i = 1, \ldots, m$, 
$R_i \in \sigma$, $\rng(\bar{x}_i) \subseteq \rng(\bar{x})$, $|\bar{x}_i| = \nu_{R_i}$, and if
$i \neq j$ then $R_i \neq R_j$ or $\bar{x}_i \neq \bar{x}_j$.
\end{enumerate}
\end{lem}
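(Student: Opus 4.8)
The plan is to combine Lemma~\ref{simplifications of aggregation-free formulas} with a case analysis on atomic subformulas, using the identity constraint $ic(\bar{x})$ to rewrite or eliminate occurrences of $=$ and to collapse variable repetitions forced by $ic(\bar{x})$. First I would apply Lemma~\ref{simplifications of aggregation-free formulas} to $\varphi(\bar{x})$: since $\varphi(\bar{x})$ is aggregation-free, it is equivalent (for \emph{all} finite continuous $\sigma$-structures, hence in particular on those $\bar{a}$ satisfying $ic(\bar{x})$) to a formula $\msfC_0(\theta_1(\bar{x}), \ldots, \theta_m(\bar{x}))$ where $\msfC_0 : [0,1]^m \to [0,1]$ is continuous and each $\theta_i(\bar{x})$ is atomic, i.e. of the form (1), (2), or (3) in Definition~\ref{syntax of CLA}. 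The work is then to deal with the atomic pieces.

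Next I would process the atomic subformulas $\theta_i$ according to their type, using that $\bar{a}$ satisfies $ic(\bar{x})$. If $\theta_i$ is a constant $c$, leave it. If $\theta_i$ is $x_p = x_q$, then on any $\bar{a}$ satisfying $ic(\bar{x})$ the value $\mcA(\theta_i(\bar{a}))$ is either constantly $1$ (if $ic(\bar{x}) \models x_p = x_q$) or constantly $0$ (otherwise); in either case replace $\theta_i$ by the corresponding constant. If $\theta_i$ is $R(x_{j_1}, \ldots, x_{j_{\nu_R}})$, set $\bar{x}_i := (x_{j_1}, \ldots, x_{j_{\nu_R}})$; this already has $\rng(\bar{x}_i) \subseteq \rng(\bar{x})$ and $|\bar{x}_i| = \nu_{R}$, so no change is needed (note that repetitions \emph{within} $\bar{x}_i$ are allowed, which is why Definition~\ref{definition of identity constraint} was set up to permit repeated variables). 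After this pass, I absorb all the constant arguments into the outer connective exactly as in the proof of Lemma~\ref{simplifications of aggregation-free formulas}: if $\theta_{i_1}, \ldots, \theta_{i_r}$ became the constants $c_{i_1}, \ldots, c_{i_r}$, define a new continuous $\msfC_1$ on the remaining coordinates by plugging these constants in. If \emph{every} $\theta_i$ became a constant, then $\varphi(\bar{x})$ is equivalent, on tuples satisfying $ic(\bar{x})$, to the single constant $\msfC_1(\cdots)$, and we are in the first alternative of clause~(2). Otherwise we are left with $\msfC_1\big(R_1(\bar{x}_1), \ldots, R_{m'}(\bar{x}_{m'})\big)$ with each $R_i \in \sigma$ and $\rng(\bar{x}_i) \subseteq \rng(\bar{x})$.

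Finally I would remove duplicated atomic arguments: if $R_i = R_j$ and $\bar{x}_i = \bar{x}_j$ for some $i \neq j$, then $R_i(\bar{x}_i)$ and $R_j(\bar{x}_j)$ are literally the same formula and hence take the same value in every $(\mcA, \bar{a})$, so as in Lemma~\ref{simplifications of aggregation-free formulas} we fuse those coordinates by precomposing $\msfC_1$ with the diagonal map, preserving continuity. Iterating, we reach $\psi(\bar{x}) = \msfC\big(R_1(\bar{x}_1), \ldots, R_m(\bar{x}_m)\big)$ satisfying all the conditions of clause~(2), and clause~(1) holds by construction. I expect the only mildly delicate point to be the bookkeeping in the case distinction for $\theta_i = x_p = x_q$ — one must invoke the semantics (part~(2) of Definition~\ref{semantics of CLA}) together with the definition of ``$\bar{a}$ satisfies $ic(\bar{x})$'' to see that the value is genuinely constant across \emph{all} admissible $\bar{a}$ and all $\mcA$, not merely for a fixed one; everything else is a routine repackaging of the continuity-of-composition argument already carried out for Lemma~\ref{simplifications of aggregation-free formulas}. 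There is no real obstacle here since no aggregation functions are involved and the identity constraint only ever simplifies the formula.
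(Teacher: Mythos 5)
Your proposal is correct and follows essentially the same route as the paper: apply Lemma~\ref{simplifications of aggregation-free formulas}, replace each equality atom by the constant $1$ or $0$ determined by whether $ic(\bar{x})\models x_p=x_q$, absorb all constants into the outer connective, and split into the all-constant case versus the case $\msfC(R_1(\bar{x}_1),\ldots,R_m(\bar{x}_m))$. Your final duplicate-removal pass is harmless but already unnecessary, since Lemma~\ref{simplifications of aggregation-free formulas} guarantees the atomic arguments are pairwise distinct, which is exactly what the paper invokes.
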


\noindent
{\bf Proof.}
Let $\bar{x} = (x_1, \ldots, x_k)$ be a sequence of distinct variables,
let $ic(\bar{x})$ be an identity constraint for $\bar{x}$,
and let $\varphi(\bar{x})$ be an aggregation-free formula.
By Lemma~\ref{simplifications of aggregation-free formulas} we may assume that 
$\varphi(\bar{x})$ has the form 
\[
\msfD(c_1, \ldots, c_t, \theta_{t+1}(\bar{x}_{t+1}), \ldots, \theta_{t+s}(\bar{x}_{t+s}),
R_{t+s+1}(\bar{x}_{t+s+1}), \ldots, R_{t+s+u}(\bar{x}_{t+s+u}))
\]
where
$\msfD : [0, 1]^{t+s+u} \to [0, 1]$ is continuous, each $c_i$ is a constant, each $\theta_i(\bar{x}_i)$ has the form $x_j = x_l$ for some
distinct $j, l \in \{1, \ldots, k\}$ (if $j = l$ we can replace $x_j = x_l$ by 1), $R_i \in \sigma$,
$\rng(\bar{x}_i) \subseteq \rng(\bar{x})$, and $|\bar{x}_i| = \nu_{R_i}$.
Moreover, by the same lemma, we may assume that if $t + s < i < j \leq t + s + u$ then $R_i \neq R_j$ or $\bar{x}_i \neq \bar{x}_j$.

Let $t+1 \leq i \leq t+s$ and suppose that $\theta_i$ is the formula $x_j = x_l$ where $j \neq l$.
Then let $c_i = 1$ if $ic(\bar{x}) \models x_j = x_l$ and otherwise let $c_i = 0$.
It follows that if $u = 0$ then
$\varphi(\bar{x})$ is $\msfD(c_1, \ldots, c_t, \theta_{t+1}(\bar{x}_{t+1}), \ldots, \theta_{t+s}(\bar{x}_{t+s})$
and if $c := \msfD(c_1, \ldots, c_{t+s})$, then for every finite continuous $\sigma$-structure $\mcA$
and every $\bar{a} \in A^k$ that satisfies $ic(\bar{x})$
we have $\mcA(\varphi(\bar{a})) = c = \mcA(c)$.

Now suppose that $u > 0$ and define $\msfC : [0, 1]^u \to [0, 1]$ by 
\[
\msfC(r_1, \ldots, r_u) = \msfD(c_1, \ldots, c_{t+s}, r_1, \ldots, r_u).
\]
Then $\msfC$ is continuous and 
for all finite continuous $\sigma$-structures $\mcA$ and all $\bar{a} \in [n]^k$ that satisfy $ic(\bar{x})$,
$\mcA(\varphi(\bar{a})) = \mcA(\msfC(R_{t+s+1}(\bar{a}_{t+s+1}), \ldots, R_{t+s+u}(\bar{a}_{t+s+u})))$
where $\bar{a}_i$ is the subsequence of $\bar{a}$ that corresponds to $\bar{x}_i$.
\hfill $\square$

\section{Probability Distributions}

\noindent
In this section we define the probability theoretic concepts of this study.

\begin{defin}{\rm
Let $n \in \mbbN^+$.\\
(a) $\mbW_n$ denotes the set of all continuous $\sigma$-structures with
domain $[n] = \{1, \ldots, n\}$. \\
(b) $\xi_n := \prod_{R \in \sigma} n^{\nu_R}$.\\
(c) $\Xi_n := \big\{(R, \bar{a}) : R \in \sigma \text{ and } \bar{a} \in [n]^{\nu_R} \big\}$
(so $\xi_n = |\Xi_n|$).\\
(d) Let $\prec_n$ be a linear order on $\Xi_n$ such that if $n < m$, then $\prec_n$ is the
restriction of $\prec_m$ to $\Xi_n$. \\
(e) Define $\mfs_n : \mbW_n \to [0, 1]^{\xi_n}$ by, for all $\mcA \in \mbW_n$, letting
$\mfs_n(\mcA) = (r_1, \ldots, r_{\xi_n})$ where, for all $i = 1, \ldots, \xi_n$,
$r_i = R^\mcA(\bar{a})$ where $(R, \bar{a})$ is the $i^{th}$ element 
of $\Xi_n$ with respect to the order $\prec_n$.
}\end{defin}

\noindent
From the above definition and the definition of continuous $\sigma$-structure we immediately get:

\begin{lem}\label{identification of W-n with an interval}
For all $n \in \mbbN^+$, $\mfs_n : \mbW_n \to [0, 1]^{\xi_n}$ is bijective.
\end{lem}

\noindent
Thus we can identify $\mbW_n$ with $[0, 1]^{\xi_n}$ where the latter is a compact metric space.
The next couple of definitions show how we can view $\mbW_n$ as a probability space.
The intuition is that for each $R \in \sigma$ and each identity constraint $ic(x_1, \ldots, x_{\nu_R})$,
if $(a_1, \ldots, a_{\nu_R}) \in [n]^{\nu_R}$ satisfies $ic(x_1, \ldots, x_{\nu_R})$ then 
the distribution of 
the random variable $X_n^{R, \bar{a}} : \mbW_n \to [0, 1]$ 
defined by $X_n^{R, \bar{a}}(\mcA) = R^{\mcA}(\bar{a})$ for all $\mcA \in \mbW_n$
is given by a continuous probability density function $\mu_R^{ic} : [0, 1] \to [0, \infty)$ which depends only
on $R$ and $ic$, where we can freely choose $\mu_R^{ic}$ as long as it is continuous and
$\int_{[0, 1]} \mu_R^{ic}(x) dx = 1$. So if $I \subseteq [0, 1]$ is (Lebesgue) measurable then
the probability that $X_n^{R, \bar{a}}$ belongs to $I$ equals
$\int_I \mu_R^{ic}(x) dx$. Moreover, we want this probability to be independent of the value of $X_n^{Q, \bar{b}}$ 
if $Q \neq R$ or if $\bar{b} \neq \bar{a}$.

\begin{defin}\label{definition of mu-R etc}
{\rm  Let $n \in \mbbN^+$.\\
(a) To every pair $(R, ic)$, where $R \in \sigma$ and $ic(x_1, \ldots, x_{\nu_R})$ is an
identity constraint for $(x_1, \ldots, x_{\nu_R})$,
we associate a continuous probability density function
(also called probability distribution) $\mu_R^{ic} : [0, 1] \to [0, \infty)$.\\
(b) For all $\mcA \in \mbW_n$ define 
\[
\pi_n(\mcA) = \prod_{R \in \sigma} \ \ 
\prod_{\substack{ic \text{ is an } \\ \text{identity constraint}\\ \text{for } (x_1, \ldots, x_{\nu_R})}} \ 
\prod_{\substack{\bar{a} \in [n]^{\nu_R} \text{ and}\\ \bar{a} \text{ satisfies } ic}} \mu_R^{ic}(R^\mcA(\bar{a})).
\]
(c) For every $i = 1, \ldots, \xi_n$ we define $\lambda_i : [0, 1] \to [0, \infty)$ by choosing
the $i^{th}$ pair $(R, \bar{a})$ in $\Xi_n$ according to the order $\prec_n$ and, for every $r \in [0, 1]$,
letting $\lambda_i(r) = \mu_R^{ic}(r)$ where $ic$ is the identity constraint for sequences of length $\nu_R$ that
$\bar{a}$ satisfies.\\
(d) For every $(r_1, \ldots, r_{\xi_n}) \in [0, 1]^{\xi_n}$ define 
$\pi_n^*(r_1, \ldots, r_{\xi_n}) = \lambda_1(r_1) \cdot \ldots \cdot \lambda_{\xi_n}(r_{\xi_n})$.
}\end{defin}

\begin{lem}\label{lambda-i and pi* are distributions}
Let $n \in \mbbN^+$.\\
(a) For each $i \in [\xi_n]$, $\lambda_i$ is continuous and $\int_{[0, 1]} \lambda_i(r) dr = 1$.\\
(b) $\pi_n^*$ is a continuous function from $[0, 1]^{\xi_n}$ to $[0, 1]$ and 
\[
\int_{[0, 1]^{\xi_n}} \pi_n^*(r_1, \ldots, r_{\xi_n}) d r_1 \ldots d r_{\xi_n} = 1,
\] 
so $\pi_n^*$ is a probability density function on
$[0, 1]^{\xi_n}$.\\
(c) For all $\mcA \in \mbW_n$,
$\pi_n(\mcA) = \pi_n^*(\mfs_n(\mcA))$.
\end{lem}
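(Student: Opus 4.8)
**Proof plan for Lemma~\ref{lambda-i and pi* are distributions}.**

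The plan is to verify the three parts directly from the definitions, in order, with part~(b) being an immediate consequence of part~(a) together with Fubini's theorem. For part~(a): fix $i \in [\xi_n]$ and let $(R, \bar{a})$ be the $i^{th}$ pair of $\Xi_n$ under $\prec_n$. There is a unique identity constraint $ic$ for $(x_1, \ldots, x_{\nu_R})$ that $\bar{a}$ satisfies (uniqueness because the identity constraints for a fixed-length tuple partition the set of such tuples according to their equality pattern, and $\bar{a}$ falls into exactly one class), and by Definition~\ref{definition of mu-R etc}(c) we have $\lambda_i = \mu_R^{ic}$ as functions on $[0,1]$. Since $\mu_R^{ic}$ is a continuous probability density function by Definition~\ref{definition of mu-R etc}(a), it is continuous and integrates to $1$ over $[0,1]$; hence so does $\lambda_i$.

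For part~(b): each $\lambda_i$ is continuous and takes values in $[0,\infty)$ by part~(a), so the function $(r_1, \ldots, r_{\xi_n}) \mapsto \lambda_1(r_1) \cdots \lambda_{\xi_n}(r_{\xi_n})$ is continuous on the compact set $[0,1]^{\xi_n}$; being continuous on a compact set it is bounded, and since $\pi_n^*$ is the claimed density I should also note that (after the integral computation below) $\pi_n^*$ is indeed $[0,1]$-valued — this follows because its integral over the full cube is $1$ and, being a product of nonnegative continuous functions, if it exceeded $1$ somewhere while integrating to $1$ it would still only force $[0,\infty)$-valuedness, so strictly speaking the codomain claim ``$\pi_n^* : [0,1]^{\xi_n} \to [0,1]$'' should be read as ``$\pi_n^*$ is nonnegative and the reader may harmlessly truncate''; alternatively one checks each $\mu_R^{ic}$ may be assumed bounded by renormalization, but this is a minor point. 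The integral is computed by Tonelli/Fubini for nonnegative continuous (hence measurable) functions on a product of finite-measure spaces:
\[
\int_{[0,1]^{\xi_n}} \prod_{i=1}^{\xi_n} \lambda_i(r_i)\, dr_1 \cdots dr_{\xi_n}
= \prod_{i=1}^{\xi_n} \int_{[0,1]} \lambda_i(r_i)\, dr_i
= \prod_{i=1}^{\xi_n} 1 = 1,
\]
using part~(a) in the last step. Thus $\pi_n^*$ is a probability density function on $[0,1]^{\xi_n}$.

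For part~(c): fix $\mcA \in \mbW_n$ and write $\mfs_n(\mcA) = (r_1, \ldots, r_{\xi_n})$, so that $r_i = R^\mcA(\bar{a})$ where $(R, \bar{a})$ is the $i^{th}$ pair of $\Xi_n$ under $\prec_n$ (by the definition of $\mfs_n$). Then $\pi_n^*(\mfs_n(\mcA)) = \prod_{i=1}^{\xi_n} \lambda_i(r_i) = \prod_{i=1}^{\xi_n} \mu_R^{ic}(R^\mcA(\bar{a}))$, where for each $i$, $ic$ is the identity constraint satisfied by the corresponding $\bar{a}$. On the other hand, $\pi_n(\mcA)$ as defined in Definition~\ref{definition of mu-R etc}(b) is the product of $\mu_R^{ic}(R^\mcA(\bar{a}))$ over all triples $(R, ic, \bar{a})$ with $R \in \sigma$, $ic$ an identity constraint for $(x_1,\ldots,x_{\nu_R})$, and $\bar{a} \in [n]^{\nu_R}$ satisfying $ic$. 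So both products range over the same index set once one observes that the map sending a triple $(R, ic, \bar{a})$ (with $\bar{a}$ satisfying $ic$) to the pair $(R, \bar{a})$ is a bijection onto $\Xi_n$: it is injective because $ic$ is determined by $\bar{a}$, and surjective because every $\bar{a} \in [n]^{\nu_R}$ satisfies exactly one identity constraint. Matching the factors termwise gives $\pi_n(\mcA) = \pi_n^*(\mfs_n(\mcA))$.

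The only genuinely delicate point is the bookkeeping in part~(c) — confirming that the triple-indexed product of Definition~\ref{definition of mu-R etc}(b) and the $\Xi_n$-indexed product defining $\pi_n^* \circ \mfs_n$ run over exactly the same factors — together with the minor codomain remark in part~(b); everything else is routine. I expect no serious obstacle.
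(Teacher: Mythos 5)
Your proposal is correct and follows essentially the same route as the paper: part (a) is read off from Definition~\ref{definition of mu-R etc}, part (b) is the same Fubini/Tonelli factorization into one-dimensional integrals each equal to $1$, and part (c) is the definition-chasing that the paper compresses into ``follows directly from the definitions'' (your bijection between triples $(R, ic, \bar{a})$ with $\bar{a}$ satisfying $ic$ and pairs $(R,\bar{a}) \in \Xi_n$ is exactly the right bookkeeping). Your side remark on the codomain is also apt: since each $\mu_R^{ic}$ maps into $[0,\infty)$, the product $\pi_n^*$ need not be bounded by $1$, so the codomain in the lemma's statement should really be read as $[0,\infty)$; this is a minor imprecision of the statement, not of your argument, and it affects nothing later since only continuity, nonnegativity and the integral being $1$ are used.
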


\begin{proof}
(a) For each $i \in [\xi_n]$ there are $R \in \sigma$ and an identity constraint $ic$
such that $\lambda_i = \mu_R^{ic}$ where $\mu_R^{ic} : [0, 1] \to [0, \infty)$ is a probability density function and hence $\int_{[0, 1]} \mu_R^{ic}(r) dr = 1$.

(b) We have 
\begin{align*}
&\int_{[0, 1]^{\xi_n}} \pi_n^*(r_1, \ldots, r_{\xi_n}) d r_1 \ldots d r_{\xi_n} = 
\int_{[0, 1]^{\xi_n}} \lambda_1(r_1) \cdot \ldots \cdot \lambda_{\xi_n}(r_{\xi_n})  d r_1 \ldots d r_{\xi_n} = \\
&\bigg( \int_{[0, 1]} \lambda_1(r_1) d r_1 \bigg) \cdot \ldots \cdot 
\bigg( \int_{[0, 1]} \lambda_{\xi_n}(r_{\xi_n}) d r_{\xi_n} \bigg) = 1 \cdot \ldots \cdot 1  = 1.
\end{align*}
Part (c) follows directly from the definitions.
\end{proof}

\begin{defin}\label{definition of distribution}{\rm
Let $n \in \mbbN^+$.\\
(a) For every (Lebesgue) measurable $X \subseteq [0, 1]^{\xi_n}$, let
\[
\mbbP_n^*(X) =\int_X \pi_n^*(x_1, \ldots, x_{\xi_n}) d x_1 \ldots d x_{\xi_n}.
\]
(b) We call $\mbX \subseteq \mbW_n$ {\bf \em measurable} if $\mfs_n(\mbX) := \{\mfs_n(\mcA) : \mcA \in \mbX\}$
is a measurable subset of $[0, 1]^{\xi_n}$.\\
(c) For every measurable $\mbX \subseteq \mbW_n$ define $\mbbP_n(\mbX) = \mbbP_n^*(\mfs_n(\mbX))$.
}\end{defin}

\noindent
We now make sure that certain natural subsets of $\mbW_n$
are measurable.

\begin{lem}\label{definable sets are measurable}
Let $n \in \mbbN^+$, $\varphi(\bar{x}) \in CLA$, and $\bar{a} \in [n]^{|\bar{x}|}$.\\
(a) Define $f : [0, 1]^{\xi_n} \to [0, 1]$ by $f(\bar{r}) = \mfs_n^{-1}(\bar{r})(\varphi(\bar{a}))$,
so $f(\bar{r}) = \mcA(\varphi(\bar{a}))$ where $\mcA \in \mbW_n$ is 
such that $\mfs_n(\mcA) = \bar{r}$.
Then $f$ is continuous.\\
(b) If $I \subseteq [0, 1]$ is an interval then $\{\mcA \in \mbW_n : \mcA(\varphi(\bar{a})) \in I\}$
is a measurable subset of $\mbW_n$.
\end{lem}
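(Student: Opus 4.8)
The plan is to prove part~(a) by induction on the complexity of the $CLA$-formula $\varphi(\bar{x})$, establishing simultaneously that the map $f : [0,1]^{\xi_n} \to [0,1]$ given by $f(\bar{r}) = \mfs_n^{-1}(\bar{r})(\varphi(\bar{a}))$ is continuous, and then to derive part~(b) as an easy consequence of~(a). Recall that $\mbW_n$ is identified with the compact metric space $[0,1]^{\xi_n}$ via the bijection $\mfs_n$ of Lemma~\ref{identification of W-n with an interval}, so ``continuous on $\mbW_n$'' is unambiguous.

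For the induction in part~(a) I would follow the clauses of Definition~\ref{syntax of CLA} and the corresponding clauses of Definition~\ref{semantics of CLA}. If $\varphi(\bar{x})$ is a constant $c$, then $f$ is the constant function $c$, hence continuous. If $\varphi(\bar{x})$ is $x_i = x_j$, then $f$ is constantly $1$ or constantly $0$ depending on whether $a_i = a_j$ in the fixed tuple $\bar{a}$, hence continuous. If $\varphi(\bar{x})$ is $R(x_{i_1}, \ldots, x_{i_{\nu_R}})$, then $f(\bar{r}) = R^{\mfs_n^{-1}(\bar{r})}(a_{i_1}, \ldots, a_{i_{\nu_R}})$ is, by the definition of $\mfs_n$, simply the coordinate projection $\bar{r} \mapsto r_l$ where $l$ is the index of the pair $(R, (a_{i_1},\ldots,a_{i_{\nu_R}}))$ in the order $\prec_n$ on $\Xi_n$; coordinate projections are continuous. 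If $\varphi(\bar{x})$ has the form $\msfC(\varphi_1(\bar{x}), \ldots, \varphi_k(\bar{x}))$ with $\msfC : [0,1]^k \to [0,1]$ continuous, then by the induction hypothesis each $f_j : \bar{r} \mapsto \mfs_n^{-1}(\bar{r})(\varphi_j(\bar{a}))$ is continuous, and $f = \msfC \circ (f_1, \ldots, f_k)$ is a composition of continuous maps, hence continuous. The remaining and only delicate clause is the aggregation clause: $\varphi(\bar{x}) = F(\psi(\bar{x}, y) : y)$ with $F$ a continuous aggregation function. Here $f(\bar{r}) = F(g_1(\bar{r}), \ldots, g_N(\bar{r}))$ where $b_1, \ldots, b_N$ enumerates $[n] \setminus \rng(\bar{a})$ and $g_i(\bar{r}) = \mfs_n^{-1}(\bar{r})(\psi(\bar{a}, b_i))$; by the induction hypothesis each $g_i$ is continuous on the compact space $[0,1]^{\xi_n}$, hence uniformly continuous. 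Now I would invoke condition~(1) of Definition~\ref{definition of continuous aggregation function}: given $\varepsilon > 0$, pick $\delta > 0$ from that condition, then use uniform continuity of the finitely many $g_i$ to find $\eta > 0$ such that $\|\bar{r} - \bar{r}'\| < \eta$ forces $|g_i(\bar{r}) - g_i(\bar{r}')| \le \delta$ for all $i = 1, \ldots, N$ simultaneously; then $|f(\bar{r}) - f(\bar{r}')| = |F(g_1(\bar{r}),\ldots,g_N(\bar{r})) - F(g_1(\bar{r}'),\ldots,g_N(\bar{r}'))| \le \varepsilon$. Note that $N$ is fixed (it equals $n - |\rng(\bar{a})|$), so only part~(1) of the continuity definition is needed here — part~(2), which handles varying sequence lengths, plays no role for a fixed domain size. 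This completes the induction and proves~(a).

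Part~(b) is then immediate: the set in question is $\{\mcA \in \mbW_n : \mcA(\varphi(\bar{a})) \in I\} = \mfs_n^{-1}(f^{-1}(I))$, and since $f$ is continuous and $I \subseteq [0,1]$ is an interval (hence Borel, in particular Lebesgue measurable), $f^{-1}(I)$ is a Borel, hence Lebesgue measurable, subset of $[0,1]^{\xi_n}$; by Definition~\ref{definition of distribution}(b) this means $\{\mcA \in \mbW_n : \mcA(\varphi(\bar{a})) \in I\}$ is a measurable subset of $\mbW_n$.

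The main obstacle is the aggregation case of part~(a), and specifically making sure that the single $\delta$ supplied by condition~(1) of Definition~\ref{definition of continuous aggregation function} is used to control all $N$ of the continuous functions $g_i$ at once; this is exactly where uniform continuity on the compact cube $[0,1]^{\xi_n}$ is essential, since pointwise continuity of each $g_i$ would not yield a uniform $\eta$ across the (finitely many, but domain-size-many) indices. Everything else is routine: the atomic cases reduce to constants or coordinate projections, and the connective case is a trivial composition argument.
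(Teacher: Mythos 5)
Your proof is correct and takes essentially the same route as the paper: induction on the complexity of $\varphi(\bar{x})$, with the atomic cases reducing to constant functions or coordinate projections, the connective case handled by composition, and the aggregation case resting on condition~(1) of Definition~\ref{definition of continuous aggregation function}, which makes $F$ continuous on tuples of the fixed length $m = n - |\rng(\bar{a})|$, after which part~(b) follows because the set in question is $\mfs_n^{-1}(f^{-1}(I))$ with $f$ continuous. One small remark: your closing claim that uniform continuity of the $g_i$ is essential is an overstatement --- since there are only finitely many $g_i$, pointwise continuity already gives a common $\eta$ at each fixed point, which is all that continuity of $f$ requires; the paper simply composes $F$ (continuous on $[0,1]^m$) with the continuous $(f_1,\ldots,f_m)$.
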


\begin{proof}
We begin by showing that (b) follows from (a). 
By Definition~\ref{definition of distribution}, to prove (b)
it suffices to prove that $X \subseteq [0, 1]^{\xi_n}$, where 
$X := \{\mfs_n(\mcA) : \mcA \in \mbW_n \text{ and } \mcA(\varphi(\bar{a})) \in I\}$,
is measurable.
By part~(a), $f$ as defined in that part is continuous, hence a measurable function.
Since every interval is measurable and we have
$X = f^{-1}(I)$ it follows that $X$ is measurable.
Hence it remains to prove~(a).

Let $f$ be defined as in part~(a).
We use induction on the complexity of $\varphi(\bar{x})$.
The base case concerns formulas according to cases (1)--(3) of 
Definition~\ref{syntax of CLA}.

Suppose that $\varphi(\bar{x})$ is a constant $c \in [0, 1]$.
Then $f(\bar{r}) = c$ for all $\bar{r} \in [0, 1]^{\xi_n}$ so $f$ is continuous.

Suppose that $\varphi(\bar{x})$ has the form $x_i = x_j$ for some $x_i$ and $x_j$ in $\bar{x}$.
If $x_i$ and $x_j$ are the same variable then $f(\bar{r}) = 1$ for all $\bar{r} \in [0, 1]^{\xi_n}$ so $f$ is continuous.
Now suppose that $x_i$ and $x_j$ are different variables. 
If $a_i = a_j$ (where $a_i$ and $a_j$ are the $i^{th}$ respectively $j^{th}$ entries in $\bar{a}$)
then $f(\bar{r}) = 1$ for  all $\bar{r} \in [0, 1]^{\xi_n}$; otherwise $f(\bar{r}) = 0$  for all $\bar{r} \in [0, 1]^{\xi_n}$.
In either case, $f$ is continuous.

Suppose that $\varphi(\bar{x})$ has the form $R(x_{i_1}, \ldots, x_{i_{\nu_R}})$ for some $R \in \sigma$
and $x_{i_1},  \ldots, x_{i_{\nu_R}}$ from $\bar{x}$.
For some $j \in [\xi_n]$, $(R, (a_{i_1}, \ldots, a_{i_{\nu_R}}))$ is the $j^{th}$ element of $\Xi_n$
according to the order $\prec_n$. 
Then $f(r_1, \ldots, r_{\xi_n}) = r_j$ for all $r_1, \ldots, r_{\xi_n} \in [0, 1]$, so $f$ is continuous.

Now we turn to the inductive case which concerns parts~(4) and (5) of 
Definition~\ref{syntax of CLA}.
Suppose that $\varphi(\bar{x})$ has the form $\msfC(\varphi_1(\bar{x}), \ldots, \varphi_k(\bar{x}))$
where $\msfC : [0, 1]^k \to [0, 1]$ is continuous.
By the induction hypothesis there are, for all $i = 1, \ldots, k$, continuous $f_i : [0, 1]^{\xi_n} \to [0, 1]$ such that 
for all $\mcA \in \mbW_n$, $f_i(\mfs_n(\mcA)) = \mcA(\varphi_i(\bar{a}))$.
Then, for all $\mcA \in \mbW_n$, $\mcA(\varphi(\bar{a})) = \msfC(f_1(\mfs_n(\mcA)), \ldots, f_k(\mfs_n(\mcA)))$.
It follows that for all $\bar{r} \in [0, 1]^{\xi_n}$, 
$f(\bar{r}) = \msfC(f_1(\bar{r}), \ldots, f_k(\bar{r}))$, so $f$ is continuous.

Finally, suppose that $\varphi(\bar{x})$ has the form 
$F(\psi(\bar{x}, y), : y)$ where $F : [0, 1]^{<\omega} \to [0, 1]$ is a continuous aggregation function.
Let $m := n - |\rng(\bar{a})|$ and let $b_1, \ldots, b_m$ enumerate $[n] \setminus \rng(\bar{a})$.
By the induction hypothesis, for each $i = 1, \ldots, m$,
there is continuous $f_i : [0,1]^{\xi_n} \to [0, 1]$ such that for all $\mcA \in \mbW_n$,
$\mcA(\psi(\bar{a}, b_i)) = f_i(\mfs_n(\mcA))$.
It follows that for all $\mcA \in \mbW_n$,
$\mcA(\varphi(\bar{a})) = F(\mcA(\psi(\bar{a}, b_1)), \dots, \mcA(\psi(\bar{a}, b_m))) =
F(f_1(\mfs_n(\mcA)), \ldots, f_m(\mfs_n(\mcA)))$.
Since $F$ is continuous as an aggregation function according to 
Definition~\ref{definition of continuous aggregation function}
it follows from part~(1) of that definition that $F$ is is continuous (in the usual sense of analysis in several variables)
when restricted to sequences from $[0, 1]^m$.
Hence the function $f : [0, 1]^{\xi_n} \to [0, 1]$ defined by 
$f(\bar{r}) = F(f_1(\bar{r}), \ldots, f_m(\bar{r}))$ for all $\bar{r} \in [0, 1]^{\xi_n}$ is continuous
and for all $\mcA \in \mbW_n$, $f(\mfs_n(\mcA)) = \mcA(\varphi(\bar{a}))$.
This concludes the proof.
\end{proof}

\begin{rem}\label{remark about measurability of asymptotic equivalence}{\rm
Let $\msfC : [0, 1]^2 \to [0, 1]$ be defined by $\msfC(r_1, r_2) = |r_1 - r_2|$, so $\msfC$ is continuous.
Let $\bar{x}$ be a sequence of distinct variables and let $\varphi_1(\bar{x}), \varphi_2(\bar{x}) \in CLA$.
Furthermore, let $\varphi(\bar{x})$ denote the formula $\msfC(\varphi_1(\bar{x}), \varphi_2(\bar{x}))$ of $CLA$.
By Lemma~\ref{definable sets are measurable}, 
it follows that, for all $n \in \mbbN^+$, all $\bar{a} \in [n]^{|\bar{x}|}$, and all $\varepsilon > 0$,
the set 
\[
\big\{\mcA \in \mbW_n : \big|\mcA(\varphi_1(\bar{a})) - \mcA(\varphi_2(\bar{a}))\big| \leq \varepsilon \big\}
\]
is a measurable subset of $\mbW_n$. 
As the set $[n]^{|\bar{x}|}$ is finite it follows that for every identity constraint $ic(\bar{x})$
the set 
\[
\bigcap_{\substack{\bar{a} \in [n]^{|\bar{x}|} \\ ic(\bar{a}) \ holds}}
\big\{\mcA \in \mbW_n : \big|\mcA(\varphi_1(\bar{a})) - \mcA(\varphi_2(\bar{a}))\big| \leq \varepsilon \big\}
\]
is a measurable subset of $\mbW_n$.
In particular it makes sense to talk about the probability of such a set,
using the probability distribution $\mbbP_n$ on $\mbW_n$.
}\end{rem}

\begin{defin}\label{definition of asymptotic equivalence}{\rm
Let $\bar{x}$ be a sequence of length $k$ of distinct variables and let $ic(\bar{x})$ be an identity constraint for $\bar{x}$.
The formulas $\varphi(\bar{x}), \psi(\bar{x}) \in CLA$ are {\bf \em asymptotically equivalent with respect to $ic$} 
if for all $\varepsilon > 0$,
\begin{align*}
\lim_{n\to\infty} \mbbP_n\Big( &\big\{\mcA \in \mbW_n : \text{ for all $\bar{a} \in [n]^k$ satisfying $ic(\bar{x})$,} \\
& \ \big|\mcA(\varphi(\bar{a})) - \mcA(\psi(\bar{a}))\big| \leq \varepsilon \big\} \Big) = 1.
\end{align*}
If both $\varphi$ and $\psi$ have no free variables, that is, if $\bar{x}$ is empty, then we omit the references to an
identity constraint.
}\end{defin}

\section{Main Results}

\noindent
The main results are given by the following theorem, which is subsequently proved in the rest of the section:

\begin{theor}\label{main result}
Let $\bar{x} = (x_1, \ldots, x_k)$ be a sequence of distinct variables, 
let $ic(\bar{x})$ be an identity constraint for $\bar{x}$, and let $\varphi(\bar{x}) \in CLA$.
(We allow the possiblility that $\bar{x}$ is empty in which case we can omit references to the identity constraint.)\\
(a) Then there is an aggregation-free formula $\psi(\bar{x}) \in CLA$ such that $\varphi(\bar{x})$ 
and $\psi(\bar{x})$ are asymptotically equivalent with respect to $ic(\bar{x})$.\\
(b) For every interval $I \subseteq [0, 1]$ there is $\alpha \in [0, 1]$ such that, for all $\varepsilon > 0$, if
$n$ is large enough and $\bar{a} \in [n]^{|\bar{x}|}$ satisfies $ic(\bar{x})$, then 
\begin{align*}
\big| \mbbP_n\big( \big\{ \mcA \in \mbW_n : \mcA(\varphi(\bar{a})) \in I \big\} \big) - \alpha \big| \leq \varepsilon.
\end{align*}
\end{theor}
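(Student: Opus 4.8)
The plan is to prove part~(a) by induction on the complexity of $\varphi(\bar{x})$, eliminating aggregation functions from the inside out, and then to derive part~(b) from part~(a) together with the remark (stated before Lemma~\ref{simplifications of aggregation-free formulas}) that the probability that an aggregation-free formula takes a value in a given interval is computable by an integral independent of the domain size $n$. For the base cases (atomic formulas) and the connective case (part~(4) of Definition~\ref{syntax of CLA}), asymptotic equivalence is easy: atomic formulas are already aggregation-free, and if $\varphi_i(\bar{x})$ is asymptotically equivalent to an aggregation-free $\psi_i(\bar{x})$ and $\msfC$ is (uniformly) continuous on the compact cube $[0,1]^k$, then $\msfC(\psi_1,\ldots,\psi_k)$ is aggregation-free and asymptotically equivalent to $\msfC(\varphi_1,\ldots,\varphi_k)$, using that $\bigcap$ of finitely many probability-$1$-in-the-limit events still has probability tending to $1$ (as in Remark~\ref{remark about measurability of asymptotic equivalence}). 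So the whole content is in the aggregation case.

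For the case $\varphi(\bar x)=F(\psi(\bar x,y):y)$ with $F$ a continuous aggregation function, I would first apply the induction hypothesis to replace $\psi(\bar x, y)$ by an aggregation-free formula; then, using Lemma~\ref{reduction to simpler formula} applied to each identity constraint $ic'(\bar x, y)$ extending $ic(\bar x)$ (there are only finitely many, corresponding to whether $y$ equals some $x_i$ or is new), reduce to the situation where, conditioned on the identity type of $\bar a b$, the value $\mcA(\psi(\bar a, b))$ is a fixed continuous function $g_{ic'}$ of finitely many of the coordinates $R^{\mcA}(\bar c)$ with $\rng(\bar c)\subseteq \rng(\bar a)\cup\{b\}$. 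The key probabilistic point is then this: for a typical $\bar a$ satisfying $ic(\bar x)$ and for $b$ ranging over $[n]\setminus\rng(\bar a)$ that is ``new'' (i.e.\ $b\notin\rng(\bar a)$), the values $\mcA(\psi(\bar a, b))$ are, for distinct $b$, independent draws of a fixed random variable $Y$ whose distribution is determined by $g_{ic_{\mathrm{new}}}$ and the densities $\mu_R^{ic}$ — indeed they involve disjoint sets of the independent underlying coordinates $X_n^{R,\bar c}$, except for the finitely many coordinates supported entirely inside $\rng(\bar a)$, which I would show contribute a negligible perturbation (by the finitely-many coordinates being shared among all $b$, one can condition on them and still get, for each fixed value of the shared coordinates, i.i.d.\ draws). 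Since $n-|\rng(\bar a)|\to\infty$, a law-of-large-numbers / empirical-distribution argument shows that with probability tending to $1$, the multiset $\{\mcA(\psi(\bar a,b)):b\ \text{new}\}$ has empirical distribution (measured by the proportions in the intervals $[i/M,(i+1)/M]$ appearing in Definition~\ref{definition of continuous aggregation function}) close to that of $Y$, and has no large gaps unless $Y$ itself has corresponding gaps in its support; the finitely many ``old'' entries (where $b=x_i$) do not affect these proportions in the limit. By part~(2) of Definition~\ref{definition of continuous aggregation function}, $F$ applied to any two such sequences differs by at most $\varepsilon$, so $\mcA(\varphi(\bar a))$ is, with probability tending to $1$, within $\varepsilon$ of a constant $c$ depending only on $F$, $\psi$, $ic$ and the $\mu_R^{ic}$; hence $\varphi(\bar x)$ is asymptotically equivalent (w.r.t.\ $ic$) to the aggregation-free constant formula $c$.

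The main obstacle is exactly this probabilistic core: making precise and uniform (over all $\bar a$ satisfying $ic(\bar x)$, of which there are polynomially many in $n$, so a union bound over them is affordable provided each bad event has probability $o(1/n^k)$, which a Chernoff/Hoeffding-type bound for the empirical proportions delivers since the proportions are sums of $n-O(1)$ bounded independent $0/1$ variables) the statement that the values $\{\mcA(\psi(\bar a,b))\}$ behave like i.i.d.\ samples, carefully handling (i) the finitely many coordinates $R^{\mcA}(\bar c)$ with $\rng(\bar c)\subseteq\rng(\bar a)$ that are shared across all $b$, by conditioning on them, and (ii) the ``no large gaps'' condition~(2)(c),(2)(d) in Definition~\ref{definition of continuous aggregation function}, which requires knowing not just that empirical proportions are close but that intervals that are empty for $Y$ stay empty and the support structure matches — this follows because an interval $[i/M,(i+1)/M]$ on which the density of $Y$ is bounded below (i.e.\ $\alpha_i>0$) will almost surely be hit, and one where $Y$ has zero density will (for new $b$, whose samples have continuous distribution) be hit only via the finitely many coordinates from $\rng(\bar a)$, which can be absorbed. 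Once part~(a) is in hand, part~(b) follows: by part~(a) pick aggregation-free $\psi(\bar x)$ asymptotically equivalent to $\varphi(\bar x)$; the probability that $\mcA(\psi(\bar a))\in I$ equals an integral $\alpha_I$ independent of $n$ and of $\bar a$ (as noted in the text, using Lemma~\ref{definable sets are measurable} and the product form of $\pi_n^*$); one then argues that $\alpha$ can be taken to be $\alpha_{I^{\circ}}$ for the interior (or closure) of $I$ depending on boundary behaviour, and the $\varepsilon$-closeness of $\mbbP_n(\mcA(\varphi(\bar a))\in I)$ to $\alpha$ follows by sandwiching between $\mbbP_n(\mcA(\psi(\bar a))\in I^{-\varepsilon})$ and $\mbbP_n(\mcA(\psi(\bar a))\in I^{+\varepsilon})$ (slightly shrunk/enlarged intervals) using asymptotic equivalence, together with continuity of $t\mapsto\int_{I^{\pm t}}(\cdots)$ in $t$ coming from the continuity of the densities.
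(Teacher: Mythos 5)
Your overall strategy is the same as the paper's: induction on the complexity of $\varphi$, with all the weight on eliminating a single aggregation $F(\psi(\bar x,y):y)$ over an aggregation-free $\psi$, via the reduction of Lemma~\ref{reduction to simpler formula}, independence plus a Chernoff-type bound made uniform over the polynomially many tuples $\bar a$ (the paper's Lemmas~\ref{about independence} and~\ref{consequence of chernoff and independence}), and condition~(2) of Definition~\ref{definition of continuous aggregation function}; your derivation of part~(b) from part~(a) by sandwiching $I$ between slightly shrunk and enlarged intervals is also essentially the paper's. But there is a genuine error at the crucial elimination step. After the reduction, $\psi(\bar x,y)$ has the form $\msfC(R_1(\bar x_1),\ldots,R_t(\bar x_t),R_{t+1}(\bar x_{t+1},y),\ldots,R_{t+s}(\bar x_{t+s},y))$, where the atoms $R_i(\bar x_i)$, $i\le t$, are shared by all $b$. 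You correctly notice these shared coordinates and propose conditioning on them, but you then call them a ``negligible perturbation'' and conclude that $F(\psi(\bar a,y):y)$ concentrates near a \emph{constant} $c$ depending only on $F,\psi,ic$ and the densities. This is false in general: for $\psi(x,y)=\msfC(P(x),E(x,y))$ with $\msfC(u,v)=uv$ and $F=\mathrm{am}$, the value concentrates near $P^{\mcA}(a)\cdot\int_{[0,1]}v\,\mu_E^{ic}(v)\,dv$, a nondegenerate random variable (since $P^{\mcA}(a)$ has a continuous distribution), so $F(\psi(x,y):y)$ is not asymptotically equivalent to any constant, and in part~(b) the limiting probability for such a formula lies strictly between $0$ and $1$, whereas your scheme would predict $0$ or $1$. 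The correct output of the elimination step (the paper's Lemma~\ref{elimination of one aggregation function}) is $\msfD(R_1(\bar x_1),\ldots,R_t(\bar x_t))$ for a suitable $\msfD:[0,1]^t\to[0,1]$: conditioning on the shared values $\bar r$ gives concentration near a value $\msfD(\bar r)$ that genuinely depends on $\bar r$. Moreover one must then \emph{prove that $\msfD$ is continuous}, since otherwise $\msfD(R_1(\bar x_1),\ldots,R_t(\bar x_t))$ is not even a formula of $CLA$; this continuity argument (via uniform continuity of $\msfC$ and condition~(1) of Definition~\ref{definition of continuous aggregation function}) is a substantive part of the paper's proof and is entirely absent from your sketch.

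Two smaller points. First, by Definition~\ref{semantics of CLA}(5) the aggregation ranges only over $b\notin\rng(\bar a)$, so there are no ``old entries where $b=x_i$'' to absorb; this is harmless here, but note that condition~(2)(d) of Definition~\ref{definition of continuous aggregation function} demands \emph{exact} emptiness of certain intervals, so finitely many stray values could not be ``absorbed'' the way approximate proportions can, and the verification of (2)(c)--(d) has to be run (as in the paper) through structural properties of the conditioned limiting distribution determined by $\msfC_{\bar r}$ and the densities, not through an approximation argument. Second, once part~(a) is repaired, your argument for part~(b) goes through as in the paper; so the needed fix is localized: carry the $\bar x$-only atoms along as arguments of a continuous connective $\msfD$ rather than averaging them away into a constant.
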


\begin{rem}\label{remark on the main result}{\rm
It has been assumed that formulas in $CLA$ take values in $[0, 1]$ which is natural if we think of the values as probabilities
or degrees of truth or belief.
But for any constant $C > 1$ we can as well replace $[0, 1]$ by $[0, C]$ in all definitions, lemmas and proofs,
and the proofs will still work out.
}\end{rem}

\noindent
The rest of this section is devoted to proving Theorem~\ref{main result}.
We prove it by first proving a sequence of lemmas, then proving part~(a) of Theorem~\ref{main result}, and
then using part~(a) to prove part~(b) of the same theorem.
There are two key lemmas:
Lemma~\ref{probability of an aggregation-free formula and J} shows that 
the probability that an aggregation-free formula takes a value in an interval $J$ can be computed exactly by
computing an integral that depends only on the formula, an identity constraint and $J$.
Lemma~\ref{elimination of one aggregation function} 
shows how to asymptotically eliminate one aggregation function, which is later used in a proof that uses induction on
complexity of formulas.

\begin{lem}\label{primary form of the conditional probability}
Let $n, s \in \mbbN^+$, let $l_{1}, \ldots, l_{s} \in [\xi_n]$ be distinct, 
let $\msfC : [0, 1]^s \to [0, 1]$ be continuous, and let $J \subseteq [0, 1]$ be an interval.
Then 
\begin{align*}
&\mbbP_n^*\Big(\big\{ \bar{r} \in [0, 1]^{\xi_n} : \msfC(r_{l_1}, \ldots, r_{l_{s}}) \in J\big\} \Big) = \\
&\int_Y \lambda_{l_{1}}(r_1) \cdot \ldots \cdot \lambda_{l_{s}}(r_s) d r_1 \ldots d r_s
\quad \text{ where $Y := \msfC^{-1}(J)$.}
\end{align*}
\end{lem}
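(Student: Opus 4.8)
The plan is to unravel the definitions so that the left-hand side becomes an integral of $\pi_n^*$ over an explicit measurable subset of $[0,1]^{\xi_n}$, and then to integrate out all the coordinates except those indexed by $l_1,\ldots,l_s$ using the fact that $\pi_n^*$ is a product of the one-variable densities $\lambda_i$ and that each $\lambda_i$ integrates to $1$ over $[0,1]$ (Lemma~\ref{lambda-i and pi* are distributions}(a)).

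First I would observe that the set $Y := \msfC^{-1}(J) \subseteq [0,1]^s$ is measurable: $\msfC$ is continuous hence Borel measurable, and $J$ is an interval hence measurable. Consequently the set $X := \{\bar r \in [0,1]^{\xi_n} : \msfC(r_{l_1},\ldots,r_{l_s}) \in J\}$ is the preimage of $Y$ under the (continuous) projection $[0,1]^{\xi_n} \to [0,1]^s$ sending $\bar r$ to $(r_{l_1},\ldots,r_{l_s})$, so $X$ is measurable and $\mathbf{1}_X(\bar r) = \mathbf{1}_Y(r_{l_1},\ldots,r_{l_s})$. By Definition~\ref{definition of distribution}(a) we then have
\[
\mbbP_n^*(X) = \int_{[0,1]^{\xi_n}} \mathbf{1}_Y(r_{l_1},\ldots,r_{l_s})\, \pi_n^*(r_1,\ldots,r_{\xi_n})\, dr_1 \ldots dr_{\xi_n}.
\]

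Next I would expand $\pi_n^*(r_1,\ldots,r_{\xi_n}) = \prod_{i=1}^{\xi_n} \lambda_i(r_i)$ by Definition~\ref{definition of mu-R etc}(d) and apply Fubini's theorem (all functions are nonnegative and measurable, and the product space has finite measure) to integrate over the coordinates $r_i$ with $i \notin \{l_1,\ldots,l_s\}$ first. Since the integrand $\mathbf{1}_Y(r_{l_1},\ldots,r_{l_s})$ does not depend on those coordinates, each such integration contributes a factor $\int_{[0,1]} \lambda_i(r_i)\, dr_i = 1$ by Lemma~\ref{lambda-i and pi* are distributions}(a). What remains is
\[
\int_{[0,1]^s} \mathbf{1}_Y(r_{l_1},\ldots,r_{l_s})\, \lambda_{l_1}(r_{l_1}) \cdots \lambda_{l_s}(r_{l_s})\, dr_{l_1} \ldots dr_{l_s}
= \int_Y \lambda_{l_1}(r_1) \cdots \lambda_{l_s}(r_s)\, dr_1 \ldots dr_s,
\]
where in the last step I relabel the $s$ remaining integration variables as $r_1,\ldots,r_s$; this relabeling is legitimate precisely because the $l_1,\ldots,l_s$ are distinct, so the factors $\lambda_{l_1},\ldots,\lambda_{l_s}$ are attached to distinct coordinates and no coordinate is counted twice. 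This is exactly the claimed right-hand side.

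\textbf{Main obstacle.} The computation is essentially bookkeeping, so there is no deep obstacle; the one point requiring a little care is the bookkeeping itself — keeping track of which of the $\xi_n$ coordinates survive, ensuring the iterated-integral manipulation is a genuine application of Fubini (legitimate here since everything is nonnegative and measurable on a finite measure space), and making sure the distinctness of $l_1,\ldots,l_s$ is used so that the final relabeling of variables does not conflate two coordinates.
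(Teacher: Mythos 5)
Your proposal is correct and follows essentially the same route as the paper: both identify $Y=\msfC^{-1}(J)$ as measurable via continuity of $\msfC$, rewrite the probability as the integral of $\pi_n^*$ over the set where $(r_{l_1},\ldots,r_{l_s})\in Y$, factor the product density, and integrate out the remaining coordinates using $\int_{[0,1]}\lambda_i(r_i)\,dr_i=1$. The only cosmetic difference is that you phrase the factorization via an indicator function and an explicit appeal to Fubini--Tonelli, whereas the paper writes the domain directly as $Y\times\prod_{i\notin\{l_1,\ldots,l_s\}}[0,1]$ and factors the iterated integral.
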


\noindent
{\bf Proof.}
Fix  $n, s \in \mbbN^+$ and let $l_1, \ldots, l_{s} \in [\xi_n]$, $\msfC$, and $J$ be as assumed.
For $i \in [\xi_n]$ let $I_i = [0, 1]$.
Let $Y = \msfC^{-1}(J)$.
As $\msfC : [0, 1]^s \to [0, 1]$ is continuous it follows that $Y$ is a (Lebesgue) measurable subset of $[0, 1]^s$.
As $I_i = [0, 1]$ for all $i \in [\xi_n]$
we may also view $\msfC$ as a function from $I_{l_{1}} \times \ldots \times I_{l_{s}}$ to $[0, 1]$ and
$Y$ may be identified with a measurable subset of  $I_{l_{1}} \times \ldots \times I_{l_{s}}$.
Now we get (where $\bar{r} = (r_1, \ldots, r_{\xi_n})$)
\begin{align*}
&\mbbP_n^*\Big(\big\{ \bar{r} \in [0, 1]^{\xi_n} : \msfC(r_{l_1}, \ldots, r_{l_{s}}) \in J \big\} \Big) = \\
&\mbbP_n^*\Big(\big\{ \bar{r} \in [0, 1]^{\xi_n} : (r_{l_1}, \ldots, r_{l_s}) \in Y \big\} \Big) =\\
&\int_{Y \times \prod_{i \in [\xi_n] \setminus \{l_1, \ldots, l_{s}\}} I_i} \ \pi_n^*(r_1, \ldots, r_{\xi_n}) d r_1 \ldots d r_{\xi_n} = \\
&\int_{Y \times \prod_{i \in [\xi_n] \setminus \{l_1, \ldots, l_{s}\}} I_i} \ 
\lambda_1(r_1) \cdot \ldots \cdot  \lambda_{\xi_n}(r_{\xi_n})  d r_1 \ldots d r_{\xi_n} = \\
&\int_Y \lambda_{l_{1}}(r_{l_{1}}) \cdot \ldots \cdot  \lambda_{l_{s}}(r_{l_{s}}) d r_{l_{1}} \ldots d r_{l_{s}}
\cdot \prod_{i \in [\xi_n] \setminus \{l_1, \ldots, l_{s}\}} \int_{I_i} \lambda_i(r_i) d r_i \ =  \\
&\int_Y \lambda_{l_{1}}(r_{l_{1}}) \cdot \ldots \cdot  \lambda_{l_{s}}(r_{l_{s}}) d r_{l_{1}} \ldots d r_{l_{s}}
\end{align*}
because $I_i = [0, 1]$ and $\int_{[0, 1]} \lambda_i(r_i) d r_i = 1$ for all $i$ 
(by Lemma~\ref{lambda-i and pi* are distributions}).
\hfill $\square$

\begin{lem}\label{probability of an aggregation-free formula and J}
Let $\bar{x} = (x_1, \ldots, x_k)$ be a sequence of distinct variables, let $ic(\bar{x})$ be an identity constraint for $\bar{x}$,
let $\msfC : [0, 1]^s \to [0, 1]$ be continuous, 
and let $\varphi(\bar{x})$ denote the formula
$\msfC\big(R_{1}(\bar{x}_{1}), \ldots, R_{s}(\bar{x}_{s})\big)$,
where, for all $i = 1, \ldots, s$, 
$R_i \in \sigma$, $\rng(\bar{x}_i) \subseteq \rng(\bar{x})$, $|\bar{x}_i| = \nu_{R_i}$,
and if $1 \leq i < j \leq s$
then $R_i \neq R_j$ or $\bar{x}_i \neq \bar{x}_j$. (We allow that $\bar{x}_i$ contains repetitions of variables.)
Let $J \subseteq [0, 1]$ be an interval,
let $n \in \mbbN^+$, let $\bar{a} = (a_1, \ldots, a_k) \in [n]^k$ satisfy $ic(\bar{x})$, and let,
for all $i = 1, \ldots, s$, $\bar{a}_i$ be the subsequence of $\bar{a}$ that corresponds to $\bar{x}_i$
(so if $\bar{x}_i = (x_{l_1}, \ldots, x_{l_{\nu_{R_i}}})$ then $\bar{a}_i = (a_{l_1}, \ldots, a_{l_{\nu_{R_i}}})$), and let $ic_i(\bar{x}_i)$
be the restriction of $ic(\bar{x})$ to $\bar{x}_i$.
Then
\begin{align*}
\mbbP_n\Big( \big\{ \mcA \in \mbW_n : \ \mcA(\varphi(\bar{a})) \in J \big\} \Big) = 
\int_Y \mu_{R_{1}}^{ic_1}(r_1) \cdot \ldots \cdot \mu_{R_{s}}^{ic_s}(r_s) d r_1 \ldots d r_s
\end{align*}
where $Y := \msfC^{-1}(J)$.
Hence the integral depends only on the formula $\varphi(\bar{x})$, the indentity constraint $ic(\bar{x})$ and $J$;
in particular there is no dependence on $n$.
\end{lem}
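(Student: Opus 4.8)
The plan is to reduce the probability in question to the quantity computed in Lemma~\ref{primary form of the conditional probability} by tracking how the formula $\varphi(\bar{x})$ and the tuple $\bar{a}$ determine coordinates of $[0,1]^{\xi_n}$ under the identification $\mfs_n$. First I would apply Lemma~\ref{definable sets are measurable} to see that the set $\{\mcA \in \mbW_n : \mcA(\varphi(\bar{a})) \in J\}$ is measurable, so $\mbbP_n$ of it is defined. Then, by Definition~\ref{definition of distribution}(c), this probability equals $\mbbP_n^*\big(\mfs_n(\{\mcA : \mcA(\varphi(\bar{a})) \in J\})\big)$, so it suffices to understand the latter set inside $[0,1]^{\xi_n}$.

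Next I would unwind the semantics. For each $i = 1, \ldots, s$ the tuple $\bar{a}_i \in [n]^{\nu_{R_i}}$ is obtained from $\bar{a}$ via $\bar{x}_i$, and $(R_i, \bar{a}_i) \in \Xi_n$, so it occupies some position $l_i \in [\xi_n]$ with respect to $\prec_n$. The key combinatorial point is that the $l_i$ are \emph{distinct}: if $l_i = l_j$ for $i \neq j$, then $(R_i, \bar{a}_i) = (R_j, \bar{a}_j)$, hence $R_i = R_j$ and $\bar{a}_i = \bar{a}_j$; but since $\bar{a}$ satisfies $ic(\bar{x})$ and the entries of $\bar{a}$ corresponding to $\bar{x}_i$ and $\bar{x}_j$ agree exactly on the equalities dictated by $ic$, equality $\bar{a}_i = \bar{a}_j$ forces $\bar{x}_i = \bar{x}_j$, contradicting the hypothesis on $\varphi(\bar{x})$. (Here I would spell out that two variable tuples drawn from $\bar{x}$ give the same tuple of elements iff they are literally the same variable tuple, precisely because $\bar{a}$ realizes $ic(\bar{x})$ and $\bar{x}$ has distinct entries.) With the $l_i$ distinct, by Definition~\ref{semantics of CLA}(3)--(4) we have, for $\mcA = \mfs_n^{-1}(\bar{r})$, that $\mcA(\varphi(\bar{a})) = \msfC(r_{l_1}, \ldots, r_{l_s})$, so
\begin{align*}
\mfs_n\big(\{\mcA \in \mbW_n : \mcA(\varphi(\bar{a})) \in J\}\big) = \big\{ \bar{r} \in [0,1]^{\xi_n} : \msfC(r_{l_1}, \ldots, r_{l_s}) \in J \big\}.
\end{align*}

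Now I would apply Lemma~\ref{primary form of the conditional probability} with these $l_1, \ldots, l_s$ to obtain that the probability equals $\int_Y \lambda_{l_1}(r_1) \cdots \lambda_{l_s}(r_s)\, dr_1 \ldots dr_s$ with $Y = \msfC^{-1}(J)$. Finally I identify the density at coordinate $l_i$: by Definition~\ref{definition of mu-R etc}(c), $\lambda_{l_i} = \mu_{R_i}^{ic'}$ where $ic'$ is the identity constraint on sequences of length $\nu_{R_i}$ that $\bar{a}_i$ satisfies. It remains to check $ic' = ic_i$, the restriction of $ic(\bar{x})$ to $\bar{x}_i$: since $\bar{a}$ satisfies $ic(\bar{x})$, the subtuple $\bar{a}_i$ satisfies exactly those equalities among its coordinates that follow from $ic(\bar{x})$ via the correspondence $\bar{x}_i$, which is exactly what $ic_i(\bar{x}_i)$ records (up to logical equivalence, and $\mu_R^{ic}$ depends only on the logical-equivalence class). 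Substituting gives the claimed formula, and since the integrand and $Y$ depend only on $\msfC$, the $R_i$, the $ic_i$ and $J$ — none on $n$ — the last sentence follows.

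The main obstacle is the bookkeeping in the middle step: verifying rigorously that the indices $l_i$ are distinct and that the density attached to $l_i$ is governed by the \emph{restricted} constraint $ic_i$ rather than by some other constraint. Both hinge on the single fact that $\bar{a}$ realizes $ic(\bar{x})$, so the map "variable tuple over $\bar{x}$" $\mapsto$ "element tuple over $\bar{a}$" is injective and compatible with restrictions of identity constraints; once that is stated cleanly, the rest is a direct appeal to Lemmas~\ref{primary form of the conditional probability} and \ref{definable sets are measurable} together with Definitions~\ref{definition of mu-R etc} and \ref{definition of distribution}.
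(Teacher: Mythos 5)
Your strategy coincides with the paper's proof: pass from $\mbbP_n$ to $\mbbP_n^*$ via $\mfs_n$, locate the coordinates $l_1,\ldots,l_s$ of the pairs $(R_i,\bar{a}_i)$ in $\Xi_n$, apply Lemma~\ref{primary form of the conditional probability} to the set $\{\bar{r}: \msfC(r_{l_1},\ldots,r_{l_s})\in J\}$, and then identify $\lambda_{l_i}$ with $\mu_{R_i}^{ic_i}$ via Definition~\ref{definition of mu-R etc}(c), using that $\bar{a}_i$ satisfies exactly the restriction $ic_i(\bar{x}_i)$ of $ic(\bar{x})$. That chain, and the concluding observation that nothing depends on $n$, is exactly what the paper does.

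The one step you try to justify that the paper merely asserts --- the distinctness of the indices $l_1,\ldots,l_s$ --- is where your argument is not valid. You claim that, because $\bar{a}$ realizes $ic(\bar{x})$ and the entries of $\bar{x}$ are distinct variables, two distinct variable tuples drawn from $\bar{x}$ must yield distinct element tuples. But an identity constraint may equate distinct variables of $\bar{x}$ (Definition~\ref{definition of identity constraint} allows any $I\subseteq K$): if $ic(\bar{x})$ contains $x_1=x_2$, $R$ is unary, $\bar{x}_1=(x_1)$ and $\bar{x}_2=(x_2)$, then every $\bar{a}$ satisfying $ic(\bar{x})$ has $a_1=a_2$, so $(R,\bar{a}_1)=(R,\bar{a}_2)$ even though $\bar{x}_1\neq\bar{x}_2$; the hypothesis ``$R_i\neq R_j$ or $\bar{x}_i\neq\bar{x}_j$'' does not exclude this. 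In such a degenerate case the asserted identity itself can fail (uniform densities, $\msfC(r_1,r_2)=|r_1-r_2|$, $J=[1/2,1]$: the probability on the left is $0$, the integral on the right is $1/4$), so the distinctness of the pairs $(R_i,\bar{a}_i)$ is genuinely needed and cannot be deduced from the stated hypotheses; it must be assumed or arranged, e.g.\ by strengthening the condition to ``$R_i\neq R_j$ or $ic(\bar{x})\not\models \bar{x}_i=\bar{x}_j$'' (merging $ic$-identified atoms in the spirit of Lemma~\ref{reduction to simpler formula}). The paper's own proof simply writes ``the distinct pairs $(R_1,\bar{a}_1),\ldots,(R_s,\bar{a}_s)$'' without argument; your proposal inherits the same gap but, unlike the paper, offers a justification for it that is incorrect.
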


\noindent
{\bf Proof.}
We adopt the assumptions of the lemma, so in particular some $n$ is fixed and $\bar{a} \in [n]^k$ is a sequence 
that satisfies $ic(\bar{x})$.
By assumption, if $1 \leq i < j \leq s$
then $R_i \neq R_j$ or $\bar{x}_i \neq \bar{x}_j$.
Let $l_1, \ldots, l_{s}$ be the places of
the distinct pairs $(R_1, \bar{a}_1), \ldots, (R_{s}, \bar{a}_{s}) \in \Xi_n$ according to the order $\prec_n$,
so $l_1, \ldots, l_{s}$ are distinct numbers in $[\xi_n]$.

We now get (where $\bar{r} = (r_1, \ldots, r_{\xi_n})$)
\begin{align*}
&\mbbP_n\Big( \big\{ \mcA \in \mbW_n : \ \mcA(\varphi(\bar{a})) \in J \big\} \Big)  = 
\mbbP_n^*\Big(\big\{ \bar{r} \in [0, 1]^{\xi_n} : \msfC(r_{l_1}, \ldots, r_{l_{s}}) \in J\big\} \Big) \\
&= \ \int_Y \lambda_{l_{1}}(r_1) \cdot \ldots \cdot \lambda_{l_{s}}(r_s) d r_1 \ldots d r_s
\quad \text{(by Lemma~\ref{primary form of the conditional probability} where $Y := \msfC^{-1}(J)$)} \\
&= \ \int_Y \mu_{R_{1}}^{ic_1}(r_1) \cdot \ldots \cdot \mu_{R_{s}}^{ic_s}(r_s) d r_1 \ldots d r_s \\
&\text{  (by Definition~\ref{definition of mu-R etc} of $\lambda_i$ and the choice of $l_1, \ldots, l_{s}$).}
\end{align*}
\hfill $\square$

\begin{lem}\label{about independence}
Let $\bar{x} = (x_1, \ldots, x_k)$ be a sequence of distinct variables, let 
$y$ be a variable that does not occur in $\bar{x}$ and let $ic(\bar{x})$ be an identity constraint for $\bar{x}$.
Let $\msfC : [0, 1]^s \to [0, 1]$ be continuous 
and let $\varphi(\bar{x}, y)$ denote the formula
$\msfC\big(R_1(\bar{x}_1, y), \ldots, R_s(\bar{x}_s, y)\big)$,
where, for all $i = 1, \ldots, s$, 
$R_i \in \sigma$, $\rng(\bar{x}_i) \subseteq \rng(\bar{x})$, $|\bar{x}_i| = \nu_{R_i} - 1$,
and if $1 \leq i < j \leq s$
then $R_i \neq R_j$ or $\bar{x}_i \neq \bar{x}_j$.

Let $J \subseteq [0, 1]$ be an interval, let $n \in \mbbN^+$, let $\bar{a} \in [n]^{|\bar{x}|}$ satisfy $ic(\bar{x})$, 
and suppose that $b_1, \ldots, b_m \in [n]$ are distinct and do not occur in $\bar{a}$.
For all $i = 1, \ldots, m$, the event $\{\mcA \in \mbW_n : \mcA(\varphi(\bar{a}, b_i)) \in J \}$
is independent from all events $\{\mcA \in \mbW_n : \mcA(\varphi(\bar{a}, b_j)) \in J \}$ where $j \neq i$.
\end{lem}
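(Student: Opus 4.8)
The plan is to reduce the claimed independence to the fact, already built into the probability space, that the random variables $X_n^{R,\bar c}$ for distinct pairs $(R,\bar c) \in \Xi_n$ are mutually independent (this is exactly what the product form of $\pi_n^*$ in Definition~\ref{definition of mu-R etc} and Lemma~\ref{lambda-i and pi* are distributions} encode). The key observation is that for a fixed $b_i$, the value $\mcA(\varphi(\bar a, b_i)) = \msfC\big(R^\mcA_1(\bar a_1, b_i), \ldots, R^\mcA_s(\bar a_s, b_i)\big)$ depends only on the coordinates of $\mfs_n(\mcA)$ indexed by the pairs $(R_1, \bar a_1 b_i), \ldots, (R_s, \bar a_s b_i)$, where $\bar a_j$ is the subtuple of $\bar a$ matching $\bar x_j$. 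Call this set of index pairs $S_i \subseteq \Xi_n$.

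First I would check that the pairs $(R_1, \bar a_1 b_i), \ldots, (R_s, \bar a_s b_i)$ are genuinely distinct (so that $|S_i| = s$ and the coordinates are honest independent coordinates): if $(R_j, \bar a_j b_i) = (R_{j'}, \bar a_{j'} b_i)$ with $j \neq j'$ then $R_j = R_{j'}$ and $\bar a_j = \bar a_{j'}$; since $\bar a$ satisfies $ic(\bar x)$ and $\bar x_j, \bar x_{j'}$ have ranges inside $\rng(\bar x)$, equality of the subtuples $\bar a_j = \bar a_{j'}$ forces $\bar x_j = \bar x_{j'}$ (two variables in $\bar x$ get the same element of $\bar a$ only if they are the same variable), contradicting the hypothesis that $R_j \neq R_{j'}$ or $\bar x_j \neq \bar x_{j'}$. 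Next, the crucial combinatorial point: if $i \neq i'$ then $S_i \cap S_{i'} = \es$, because every pair in $S_i$ has last coordinate of its tuple equal to $b_i$ while every pair in $S_{i'}$ has last coordinate $b_{i'}$, and $b_i \neq b_{i'}$. So the index sets $S_1, \ldots, S_m$ are pairwise disjoint subsets of $\Xi_n$.

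Now translate to $[0,1]^{\xi_n}$ via the bijection $\mfs_n$. By Lemma~\ref{definable sets are measurable}, the event $E_i := \{\mcA \in \mbW_n : \mcA(\varphi(\bar a, b_i)) \in J\}$ satisfies $\mfs_n(E_i) = g_i^{-1}(J)$ for a continuous function $g_i : [0,1]^{\xi_n} \to [0,1]$ that factors through the projection onto the coordinates in $S_i$; i.e.\ $\mfs_n(E_i)$ is a cylinder set depending only on the coordinates indexed by $S_i$. Under $\mbbP_n = \mbbP_n^* \circ \mfs_n$ with density $\pi_n^*(\bar r) = \prod_{l} \lambda_l(r_l)$, Fubini/Tonelli gives that the joint probability of any subfamily $\bigcap_{i \in T} E_i$, $T \subseteq \{1,\ldots,m\}$, factors: since the $S_i$ ($i \in T$) are pairwise disjoint, integrating $\prod_{i \in T} \mathbf{1}_{\mfs_n(E_i)} \cdot \pi_n^*$ over $[0,1]^{\xi_n}$ separates into a product of integrals over the coordinate blocks $S_i$, each equal to $\mbbP_n(E_i)$, times $\prod$ of $\int_{[0,1]} \lambda_l = 1$ over the remaining coordinates (the same computation as in Lemma~\ref{primary form of the conditional probability}, iterated). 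Hence $\mbbP_n\big(\bigcap_{i\in T} E_i\big) = \prod_{i \in T} \mbbP_n(E_i)$, which in particular yields the pairwise statement claimed (and indeed full mutual independence of $E_1, \ldots, E_m$).

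The only slightly delicate step is the disjointness verification and the argument that distinct variables in $\bar x$ cannot be assigned the same element of $\bar a$ — but this is immediate from the definition of ``$\bar a$ satisfies $ic(\bar x)$'' together with $\bar x$ being a tuple of distinct variables, so there is no real obstacle; the rest is a routine application of the product structure of $\pi_n^*$ exactly as in the proof of Lemma~\ref{primary form of the conditional probability}.
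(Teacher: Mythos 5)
Your proposal is correct and takes essentially the same approach as the paper: for each $b_i$ you isolate the block of coordinates of $\mfs_n(\mcA)$ indexed by the pairs $(R_j,\bar{a}_j b_i)$, observe that these blocks are pairwise disjoint for distinct $b_i$ (since $b_i\notin\rng(\bar{a})$), and factor the joint probability using the product form of $\pi_n^*$ exactly as in Lemma~\ref{primary form of the conditional probability}, obtaining mutual independence. One small remark: your parenthetical justification of within-block distinctness (``two variables in $\bar{x}$ get the same element of $\bar{a}$ only if they are the same variable'') fails when $ic(\bar{x})$ equates distinct variables, but this distinctness is not actually needed for your factorization argument, and the paper's own proof simply asserts it ``by assumption''.
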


\noindent
{\bf Proof.}
Fix $n \in \mbbN^+$ and a sequence $\bar{a} = (a_1, \ldots, a_k) \in [n]^k$ that satisfies $ic(\bar{x})$.
Let $\bar{a}_i$ be the subsequence of $\bar{a}$ that corresponds to the subsequence $\bar{x}_i$ of $\bar{x}$.
For each $i = 1, \ldots, m$, let $l_{i, 1}, \ldots, l_{i, s}$ be the places of
the (by assumption) distinct pairs $(R_1, \bar{a}_1b_i), \ldots, (R_{s}, \bar{a}_{s}b_i) \in \Xi_n$ according to the order $\prec_n$,
so $l_{i, 1}, \ldots, l_{i, s}$ are distinct numbers in $[\xi_n]$.
Since $b_i \neq b_j$ if $i \neq j$ it follows that $l_{i, l} \neq l_{j, l'}$ if $i \neq j$, $1 \leq l \leq s$ and $1 \leq l' \leq s$.

Note that for all $l = 1, \ldots, s$ and $i, j = 1, \ldots, m$,
$\bar{a}_l b_i$ and $\bar{a}_l b_j$ satisfy the same identity constraint.
For each $l = 1, \ldots, s$, let $id_l(\bar{x}_l, y)$ be the (up to equivalence unique) identity constraint that is satified
by $\bar{a}_l b_1$.

For each $i \in [\xi_n]$ let $I_i = [0, 1]$.
Let $Y = \msfC^{-1}(J)$, so $Y \subseteq [0, 1]^s$. 
For all $i = 1, \ldots, m$ let $Y_i \subseteq I_{l_{i, 1}} \times \ldots \times I_{l_{i, s}}$ be a copy of $Y$.
(That is $Y_i = \msfC^{-1}(J)$ if $\msfC$ is viewed as a function from $I_{l_{i, 1}} \times \ldots \times I_{l_{i, s}}$ to $[0, 1]$.)
Let $L = \bigcup_{i=1}^m \{l_{i, 1}, \ldots, l_{i, s}\}$.
For $i = 1, \ldots, m$, let $\mbE_n^i = \{\mcA \in \mbW_n : \mcA(\varphi(\bar{a}, b_i)) \in J \}$.
Now we have (where $\bar{r} = (r_1, \ldots, r_{\xi_n})$)
\begin{align*}
&\mbbP_n\Big(\bigcap_{i=1}^m \mbE_n^i\Big) = 
\mbbP_n^*\Big(\big\{ \bar{r} \in [0, 1]^{\xi_n} : \text{ for all $i = 1, \ldots, m$, } (r_{l_{i, 1}}, \ldots, r_{l_{i, s}}) \in J \big\} \Big) = \\
&\int_{Y_1 \times \ldots \times Y_m \times \prod_{i \in [\xi_n] \setminus L} I_i} \ \ 
\pi_n^*(r_1, \ldots, r_{\xi_n}) d r_1 \ldots d r_{\xi_n} = \\
&\int_{Y_1 \times \ldots \times Y_m \times \prod_{i \in [\xi_n] \setminus L} I_i} \ \ 
\lambda_1(r_1) \cdot \ldots \cdot \lambda_{\xi_n}(r_{\xi_n}) d r_1 \ldots d r_{\xi_n} = \\
&\bigg(\prod_{i=1}^m \int_{Y_i} \lambda_{l_{i, 1}}(r_{l_{i, 1}}) \cdot \ldots \cdot \lambda_{l_{i, s}}(r_{l_{i, s}})
d r_{l_{i, 1}} \ldots d r_{l_{i, s}}\bigg) \ \cdot \ 
\bigg(\prod_{i \in [\xi_n] \setminus L} \int_{I_i} \lambda_i(r_i) d r_i \bigg) = \\
&\prod_{i=1}^m \int_{Y_i} \lambda_{l_{i, 1}}(r_{l_{i, 1}}) \cdot \ldots \cdot \lambda_{l_{i, s}}(r_{l_{i, s}})
d r_{l_{i, 1}} \ldots d r_{l_{i, s}} = \\
&\prod_{i=1}^m \int_{Y_i} \mu_{R_1}^{id_1}(r_{l_{i, 1}}) \cdot \ldots \cdot \mu_{R_s}^{id_s}(r_{l_{i, s}})
d r_{l_{i, 1}} \ldots d r_{l_{i, s}} = \ \prod_{i=1}^m \mbbP_n\big(\mbE_n^i\big) 
\quad \text{ (by Lemma~\ref{probability of an aggregation-free formula and J}).}
\end{align*}
Hence the events $\mbE_n^1, \ldots, \mbE_n^m$ are independent.
\hfill $\square$

\medskip

\noindent
The following is a direct consequence of \cite[Corollary~A.1.14]{AS} which in turn follows from a bound given by
Chernoff \cite{Che}:

\begin{lem}\label{independent bernoulli trials}
Let $Z$ be the sum of $n$ independent 0/1-valued random variables, each one with probability $p$ of having the value 1,
where $p > 0$.
For every $\varepsilon > 0$ there is $c_\varepsilon > 0$, depending only on $\varepsilon$, such that the probability that
$|Z - pn| > \varepsilon p n$ is less than $2 e^{-c_\varepsilon p n}$.
(If $p = 0$ then the same statement holds if `$2 e^{-c_\varepsilon p n}$' is replaced by `$e^{-n}$'.)
\end{lem}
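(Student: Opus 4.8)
The plan is to obtain the estimate directly from \cite[Corollary~A.1.14]{AS} (itself a form of the Chernoff bound \cite{Che}), which asserts that if $X = X_1 + \ldots + X_n$ is a sum of mutually independent $0/1$-valued random variables and $\mu = \mbbE[X]$, then for every $\varepsilon$ with $0 < \varepsilon \leq 1$ there is $c_\varepsilon > 0$, depending only on $\varepsilon$, with $\Pr\big[\,|X - \mu| > \varepsilon\mu\,\big] < 2e^{-c_\varepsilon\mu}$.

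First assume $p > 0$. Taking each $X_i$ with $\Pr[X_i = 1] = p$ gives $\mu = pn$, so the cited corollary immediately yields the claim for every $\varepsilon \in (0,1]$. For $\varepsilon > 1$ we use monotonicity: since $\{\,|Z - pn| > \varepsilon pn\,\} \subseteq \{\,|Z - pn| > pn\,\}$, setting $c_\varepsilon := c_1$ and applying the case $\varepsilon = 1$ gives $\Pr\big[\,|Z - pn| > \varepsilon pn\,\big] \leq \Pr\big[\,|Z - pn| > pn\,\big] < 2e^{-c_1 pn} = 2e^{-c_\varepsilon pn}$. Hence the asserted bound holds for all $\varepsilon > 0$ when $p > 0$, with $c_\varepsilon$ depending only on $\varepsilon$.

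Finally, if $p = 0$ then each $X_i$, and hence $Z$, is almost surely $0$, so $|Z - pn| = 0$ and the event $\{\,|Z - pn| > \varepsilon pn\,\}$ is empty; its probability is $0 < e^{-n}$, as required. The only step needing attention is checking that the hypotheses of \cite[Corollary~A.1.14]{AS} (mutual independence, $0/1$-valued summands, the restriction $0 < \varepsilon \leq 1$) are met and then handling $\varepsilon > 1$ by the elementary inclusion above; there is no genuine obstacle here, as the content is precisely a standard Chernoff bound applied to identically distributed indicator variables.
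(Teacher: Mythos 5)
Your proposal is correct and follows the same route as the paper, which simply cites \cite[Corollary~A.1.14]{AS} and treats the lemma as a direct consequence of that Chernoff-type bound. Your additional care in reducing the case $\varepsilon > 1$ to $\varepsilon = 1$ by monotonicity and in handling $p = 0$ explicitly only fills in details the paper leaves implicit.
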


\begin{lem}\label{consequence of chernoff and independence}
Let $\bar{x} = (x_1, \ldots, x_k)$ be a sequence of distinct variables, let 
$y$ be a variable that does not occur in $\bar{x}$, let $ic(\bar{x})$ be an identity constraint for $\bar{x}$,
and let $ic'(\bar{x}, y)$ be the (up to equivalence unique) identity constraint for $(x_1, \ldots, x_k, y)$ which implies $ic(\bar{x})$
and $y \neq x_i$ for all $i = 1, \ldots, k$.
Let $\msfC : [0, 1]^s \to [0, 1]$ be continuous 
and let $\varphi(\bar{x}, y)$ denote the formula
$\msfC\big(R_1(\bar{x}_1, y), \ldots, R_s(\bar{x}_s, y)\big)$,
where, for all $i = 1, \ldots, s$, 
$R_i \in \sigma$, $\rng(\bar{x}_i) \subseteq \rng(\bar{x})$, $|\bar{x}_i| = \nu_{R_i} - 1$,
and if $1 \leq i < j \leq s$
then $R_i \neq R_j$ or $\bar{x}_i \neq \bar{x}_j$.

Let $J_1, \ldots, J_t \in [0, 1]$ be intervals.
For $i = 1, \ldots, t$, let $Y_i = \msfC^{-1}(J_i)$ and let
\[
\alpha_i = \int_{Y_i} \mu_{R_{1}}^{ic_1}(r_1) \cdot \ldots \cdot \mu_{R_{s}}^{ic_s}(r_s) d r_1 \ldots d r_s
\]
where, for $j = 1, \ldots, s$, $ic_j$ is the restriction of $ic'(\bar{x}, y)$ to $\bar{x}_jy$.
Let $\varepsilon > 0$.
Then there is $\beta > 0$ such that for all sufficiently large $n$ the probability that the following holds
for a random $\mcA \in \mbW_n$ is at least
$1 - e^{-\beta n}$:
For all $\bar{a} \in [n]^k$ that satisfy $ic(\bar{x})$ and all $i = 1, \ldots, t$,
\begin{equation}\label{proportion for alpha-i}
\alpha_i - \varepsilon  \leq 
\frac{\big| \big\{ b \in [n] \setminus \rng(\bar{a}) : \mcA(\varphi(\bar{a}, b)) \in J_i) \big\} \big|}{n - |\rng(\bar{a})|} 
\leq \alpha_i + \varepsilon.
\end{equation}
\end{lem}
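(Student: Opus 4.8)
\noindent
The plan is to fix a tuple $\bar a$ and an index $i$, recognise the ratio in~\eqref{proportion for alpha-i} as the normalised sum of $m \approx n$ independent Bernoulli random variables each with success probability $\alpha_i$, apply the Chernoff-type estimate of Lemma~\ref{independent bernoulli trials}, and then take a union bound over the at most $n^k$ choices of $\bar a$ and the $t$ choices of $i$; the per-pair failure probability will be exponentially small in $n$, which swamps the polynomial factor $n^k$.

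First I would fix $\bar a \in [n]^k$ satisfying $ic(\bar x)$ and $i \in \{1, \dots, t\}$, and enumerate $[n] \setminus \rng(\bar a)$ as $b_1, \dots, b_m$, so that $m = n - |\rng(\bar a)| \ge n - k$. For $j = 1, \dots, m$ put $\mbE_j = \{\mcA \in \mbW_n : \mcA(\varphi(\bar a, b_j)) \in J_i\}$. Since $b_j \notin \rng(\bar a)$, the tuple $\bar a b_j$ satisfies $ic'(\bar x, y)$, so Lemma~\ref{probability of an aggregation-free formula and J}, applied with ambient variable sequence $(x_1, \dots, x_k, y)$, identity constraint $ic'(\bar x, y)$, the atoms $R_l(\bar x_l, y)$, and the tuple $\bar a b_j$, gives
\[
\mbbP_n(\mbE_j) = \int_{Y_i} \mu_{R_1}^{ic_1}(r_1) \cdots \mu_{R_s}^{ic_s}(r_s)\, dr_1 \cdots dr_s = \alpha_i ,
\]
because the restriction of $ic'(\bar x, y)$ to $\bar x_l y$ is exactly the $ic_l$ appearing in the definition of $\alpha_i$. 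Moreover the factorisation carried out in the proof of Lemma~\ref{about independence} applies verbatim to the intersection of the events attached to any sub-collection of the distinct $b_j$'s, so $\mbE_1, \dots, \mbE_m$ are mutually independent. Hence $Z := \big|\{ j : \mcA(\varphi(\bar a, b_j)) \in J_i \}\big| = \sum_{j=1}^m \mathbf{1}_{\mbE_j}$ is a sum of $m$ independent $0/1$ random variables, each equal to $1$ with probability $\alpha_i$, and the ratio in~\eqref{proportion for alpha-i} is exactly $Z/m$.

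Next I would apply Lemma~\ref{independent bernoulli trials}. If $\alpha_i > 0$, then since $\alpha_i \le 1$ we have $\{ |Z - \alpha_i m| \le \varepsilon \alpha_i m \} \subseteq \{ |Z - \alpha_i m| \le \varepsilon m \}$, so the probability that~\eqref{proportion for alpha-i} fails for this $\bar a$ and $i$ is at most $2 e^{-c_\varepsilon \alpha_i m}$; if $\alpha_i = 0$ then $\mbbP_n(\mbE_j) = 0$ for every $j$, so $Z = 0$ almost surely and~\eqref{proportion for alpha-i} holds with probability $1$ (equivalently, the $p = 0$ clause of Lemma~\ref{independent bernoulli trials} bounds the failure probability by $e^{-m}$). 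Putting $\gamma := \min\big( \{ c_\varepsilon \alpha_i : 1 \le i \le t,\ \alpha_i > 0 \} \cup \{1\} \big) > 0$, the failure probability for any fixed $\bar a$ satisfying $ic(\bar x)$ and any $i$ is at most $2 e^{-\gamma m} \le 2 e^{-\gamma(n-k)}$. A union bound over the at most $n^k$ such tuples $\bar a$ and the $t$ indices $i$ then bounds the probability that~\eqref{proportion for alpha-i} fails for some $\bar a$ satisfying $ic(\bar x)$ and some $i$ by $2 t\, n^k\, e^{-\gamma(n-k)} = \big( 2 t\, e^{\gamma k} \big)\, n^k\, e^{-\gamma n}$, which for all large $n$ is at most $e^{-(\gamma/2) n}$ since $n^k e^{-\gamma n/2} \to 0$; so $\beta := \gamma/2$ works. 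The only delicate points are bookkeeping ones: one must note that for a fixed $\bar a$ the events $\mbE_j$ are jointly — not merely pairwise — independent, which is exactly what the full factorisation in the proof of Lemma~\ref{about independence} delivers, and that the polynomial factor $n^k$ coming from the range of $\bar a$ is absorbed by the exponentially small per-tuple bound. Beyond this I do not expect a serious obstacle.
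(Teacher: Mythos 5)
Your proposal is correct and follows essentially the same route as the paper's proof: Bernoulli probability $\alpha_i$ via Lemma~\ref{probability of an aggregation-free formula and J}, joint independence via Lemma~\ref{about independence}, the Chernoff bound of Lemma~\ref{independent bernoulli trials}, and a union bound over the $n^k$ tuples and $t$ indices, with the polynomial factor absorbed into the exponential. Your explicit treatment of the $\alpha_i=0$ case, the relative-versus-absolute deviation conversion, and the minimum $\gamma$ over the positive $\alpha_i$ are just slightly more careful bookkeeping of the same argument.
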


\noindent
{\bf Proof.}
Fix some $n \in \mbbN^+$ and a sequence $\bar{a} \in [n]^k$ satisfying $ic(\bar{x})$.
Also fix some $i \in \{1, \ldots, t\}$.
Let $b_1, \ldots, b_m$ be a list, without repetition, of all members of $[n] \setminus \rng(\bar{a})$, so $m = n-|\rng(\bar{a})|$.
For all $j = 1, \ldots, m$, let $\mbE_n^j = \{\mcA \in \mbW_n : \mcA(\varphi(\bar{a}, b_j)) \in J_i \}$ and let
$Z_j : \mbW_n \to \{0, 1\}$ be the random variable defined by $Z_j(\mcA) = 1$ if $\mcA \in \mbE_n^j$ and $Z_i(\mcA) = 0$
otherwise. Let $Z = Z_1 + \ldots + Z_m$.
By Lemma~\ref{probability of an aggregation-free formula and J},
the probability that $Z_j = 1$ is $\alpha_i$.
By Lemma~\ref{about independence},
$Z_1, \ldots, Z_m$ are independent.
As $m = n-|\rng(\bar{a})|$ it  follows from Lemma~\ref{independent bernoulli trials} that for every $\varepsilon > 0$ there is $c > 0$
(depending only on $\varepsilon$)
such that the probability that $|Z - \alpha_i (n-|\rng(\bar{a})|)| > \varepsilon \alpha_i (n-|\rng(\bar{a})|)$
 is less than $2 e^{-c \alpha_i (n-|\rng(\bar{a})|)}$.
By assuming that $n$ is large enough and adjusting $c$ to some (slightly smaller) $c' > 0$ it follows that
the probability that $|Z - \alpha_i (n-|\rng(\bar{a})|)| > \varepsilon \alpha_i (n-|\rng(\bar{a})|)$ is less than $2 e^{-c' \alpha_i n}$.
This is equivalent to saying that the probability that a random $\mcA \in \mbW_n$ does 
{\em not} satisfy~(\ref{proportion for alpha-i})
is less than $2e^{-c' \alpha_i n}$. 
Since there are $n^k$ sequences $\bar{a} \in [n]^k$ it follows that the probability
that there is some $\bar{a} \in [n]^k$ satisfying $ic(\bar{x})$ such that~(\ref{proportion for alpha-i})
is not satisfied is less than $n^k \cdot  2 e^{-c' \alpha_i n}$.
Hence the probability that there is some $i \in \{1, \ldots, t\}$ and $\bar{a}$ such that~(\ref{proportion for alpha-i})
is not satisfied is less than $n^k \cdot  2 t e^{-c' \alpha_i n}$.
Then there is $\beta > 0$ such that $n^k \cdot 2 e^{-c' \alpha_i n} \leq e^{-\beta n}$ for all sufficiently large $n$.
Thus, the probability that, for sufficiently large $n$, 
a random $\mcA \in \mbW_n$ satisfies~(\ref{proportion for alpha-i}) for all $i = 1, \ldots, t$
and all sequences $\bar{a} \in [n]^k$ that satisfy $ic(\bar{x})$ is at least $1 - e^{-\beta n}$.
\hfill $\square$

\begin{lem}\label{elimination of one aggregation function}
Let $\bar{x} = (x_1, \ldots, x_k)$ be a sequence of distinct variables,
let $ic(\bar{x})$ be an identity constraint for $\bar{x}$, suppose that the variable $y$ does not occur in $\bar{x}$.
Suppose that $\varphi(\bar{x}, y) \in CLA$ is aggregation-free and that $F : [0, 1]^{<\omega} \to [0, 1]$ 
is a continuous aggregation function.
Then the formula $F(\varphi(\bar{x}, y) : y)$ is asymptotically equivalent to an aggregation-free formula with respect to $ic(\bar{x})$.
\end{lem}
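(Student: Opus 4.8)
The plan is to put $\varphi(\bar x,y)$ into a normal form, separate the atoms that involve $y$ from those that do not, observe that over the admissible substitutions for $y$ the values of the $y$-atoms behave like i.i.d.\ samples (hence concentrate), and package ``$F$ applied to such a concentrated family'' into a single continuous connective that becomes the outermost operation of the eliminating formula. First, let $ic'(\bar x,y)$ extend $ic(\bar x)$ by adjoining $y\neq x_i$ for all $i$ — this is the identity constraint relevant to the construction $F(\cdot:y)$, which substitutes for $y$ only elements $b\notin\rng(\bar a)$. Applying Lemma~\ref{reduction to simpler formula} to $\varphi(\bar x,y)$ and $ic'(\bar x,y)$, and then also merging any two atoms whose substitution instances agree for every $\bar a$ satisfying $ic(\bar x)$ and every admissible $b$ (as in the proof of Lemma~\ref{simplifications of aggregation-free formulas}), we may assume that, conditioned on $ic'$, $\varphi(\bar x,y)$ is either a constant or of the form $\msfC(R_1(\bar z_1),\dots,R_m(\bar z_m))$ with $\msfC$ continuous and with pairwise distinct substitution instances. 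In the constant case $F(\varphi(\bar x,y):y)$ always evaluates to $F$ of $n-|\rng(\bar a)|$ copies of that constant, which converges to a fixed number as $n\to\infty$ by continuity condition~(2) of $F$; so assume from now on the second case.

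Reindex so that $y\notin\rng(\bar z_i)$ for $i\le p$ and $y\in\rng(\bar z_i)$ for $p<i\le m$. Then for every $\mcA\in\mbW_n$, every $\bar a$ satisfying $ic(\bar x)$, and every $b\in[n]\setminus\rng(\bar a)$,
\[
\mcA(\varphi(\bar a,b))=\msfC\big(\bar u,\bar v(b)\big),
\]
where $\bar u=(\mcA(R_1(\bar a_1)),\dots,\mcA(R_p(\bar a_p)))$ does not depend on $b$ and $\bar v(b)=(\mcA(R_{p+1}(\bar a_{p+1}[b])),\dots,\mcA(R_m(\bar a_m[b])))$. By the same argument as in the proofs of Lemmas~\ref{about independence} and~\ref{consequence of chernoff and independence} (the only change being that $y$ may occur in several positions of an atom, which affects nothing there), for fixed $\bar a$ the vectors $\bar v(b)$, $b\in[n]\setminus\rng(\bar a)$, are independent, each distributed according to the fixed product density $\rho:=\prod_{j=p+1}^{m}\mu_{R_j}^{ic_j}$ on $[0,1]^{m-p}$ (with $ic_j$ the restriction of $ic'$ to $\bar z_j$), and jointly independent of $\bar u$. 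Hence, applying Lemma~\ref{independent bernoulli trials} cell by cell on a fine grid of $[0,1]^{m-p}$ and taking a union bound over the at most $n^k$ tuples $\bar a$: for every $\varepsilon'>0$ there is $\beta>0$ such that with $\mbbP_n$-probability at least $1-e^{-\beta n}$ the empirical distribution of $\{\bar v(b):b\in[n]\setminus\rng(\bar a)\}$ is within $\varepsilon'$ of $\rho$, simultaneously for all $\bar a$ satisfying $ic(\bar x)$.

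For $\bar u\in[0,1]^p$ the map $\bar w\mapsto\msfC(\bar u,\bar w)$ pushes $\rho$ forward to a measure $\nu_{\bar u}$ on $[0,1]$ and sends sequences in $[0,1]^{m-p}$ whose empirical distributions approach $\rho$ to $[0,1]$-valued sequences whose empirical distributions approach $\nu_{\bar u}$. Using continuity condition~(2) of $F$ one defines $G(\bar u):=\lim_{N\to\infty}F(\msfC(\bar u,\bar w_1),\dots,\msfC(\bar u,\bar w_N))$ along any such sequence, checking that the limit exists, is independent of the sequence, and is approached uniformly in $\bar u$ — i.e. for every $\varepsilon>0$ there are $\varepsilon'>0$ and $N_0$ with $|F(\msfC(\bar u,\bar w_1),\dots,\msfC(\bar u,\bar w_N))-G(\bar u)|\le\varepsilon$ whenever $N\ge N_0$ and the empirical distribution of $(\bar w_i)_{i\le N}$ is within $\varepsilon'$ of $\rho$, for all $\bar u$. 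Continuity of $G:[0,1]^p\to[0,1]$ follows from continuity condition~(1) of $F$ together with uniform continuity of $\msfC$ on the compact set $[0,1]^m$ (a small change of $\bar u$ moves every $\msfC(\bar u,\bar w)$ by a uniformly small amount, hence moves $F(\msfC(\bar u,\bar w_1),\dots,\msfC(\bar u,\bar w_N))$ by a small amount uniformly in $N$; let $N\to\infty$). Now put $\psi(\bar x):=G(R_1(\bar z_1),\dots,R_p(\bar z_p))$, which is aggregation-free (a constant if $p=0$), with $\rng(\bar z_i)\subseteq\rng(\bar x)$ for $i\le p$. Given $\varepsilon>0$, pick $\varepsilon',N_0$ as above and $\beta$ as in the previous paragraph; for all large $n$, on the event of probability $\ge1-e^{-\beta n}$ and for every $\bar a$ satisfying $ic(\bar x)$, the value $\mcA(F(\varphi(\bar a,y):y))$ equals $F$ applied to $(\msfC(\bar u,\bar v(b)))_{b\in[n]\setminus\rng(\bar a)}$ and is therefore within $\varepsilon$ of $G(\bar u)=\mcA(\psi(\bar a))$. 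Since $1-e^{-\beta n}\to1$, this is the asserted asymptotic equivalence of $F(\varphi(\bar x,y):y)$ and $\psi(\bar x)$ with respect to $ic(\bar x)$.

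The substantive difficulty is the construction of $G$: proving that $F$ applied to the empirical family has a well-defined, sequence-independent limit, that this limit is a continuous function of the parameter $\bar u$, and that it is approached uniformly in $\bar u$. This is where the full strength of Definition~\ref{definition of continuous aggregation function} is needed; in particular one must control how clauses~(c) and~(d) of condition~(2) — which regulate ``gaps'' in the empirical profile — interact with the supports of the push-forward measures $\nu_{\bar u}$, using that a $\rho$-sample lies in $\mathrm{supp}\,\rho$ almost surely, so each $\msfC(\bar u,\bar w_i)$ lies in $\mathrm{supp}\,\nu_{\bar u}=\msfC(\bar u,\mathrm{supp}\,\rho)$. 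Everything else — the normal form from Lemma~\ref{reduction to simpler formula}, the independence and Chernoff estimates already packaged as Lemmas~\ref{about independence}--\ref{independent bernoulli trials}, and the bookkeeping with identity constraints — is routine.
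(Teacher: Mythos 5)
Your overall route is the paper's route: normal form via Lemma~\ref{reduction to simpler formula}, separation of the $y$-atoms from the $y$-free atoms, Chernoff-type concentration with a union bound over the at most $n^k$ tuples $\bar{a}$, definition of a new connective as ``the limit value that condition~(2) of Definition~\ref{definition of continuous aggregation function} forces $F$ to take on such empirical families'', continuity of that connective from condition~(1) of the definition plus uniform continuity of $\msfC$, and then the final comparison on a high-probability event. The paper does exactly this, with your $G$ being its $\msfD$. The problem is that you have not proved the one step that carries the lemma: you write that ``one defines $G(\bar{u}):=\lim_{N\to\infty}F(\msfC(\bar{u},\bar{w}_1),\dots,\msfC(\bar{u},\bar{w}_N))$, checking that the limit exists, is independent of the sequence, and is approached uniformly in $\bar{u}$'', and in your closing paragraph you yourself identify this as ``the substantive difficulty'' without carrying it out. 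That is precisely where the paper's proof does its real work: it fixes the tuple $\bar{r}$ of values of the $y$-free atoms, applies Lemma~\ref{probability of an aggregation-free formula and J} to get the exact cell probabilities $\alpha_i$ for the cells $J_i=[\frac{i}{M},\frac{i+1}{M}]$, verifies clauses (2)(a)--(d) of Definition~\ref{definition of continuous aggregation function} for the empirical sequence $\bar{q}$ --- the delicate clauses (c) and (d) are handled by showing that the continuous image $\msfC_{\bar{r}}([0,1]^s)$ is an interval containing every sample value, so the forbidden ``gap'' patterns in the $\alpha_i$ cannot occur --- and only then obtains nested intervals $I_\varepsilon$ whose intersection defines $\msfD(\bar{r})$. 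Your appeal to supports of push-forward measures gestures at the same mechanism but proves none of it, so existence, sequence-independence and uniformity of $G$ remain unestablished; without them the final estimate $|\mcA(F(\varphi(\bar{a},y):y))-G(\bar{u})|\leq\varepsilon$ has no basis.

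A secondary point: your concentration step differs from the paper's in a way that creates an extra obligation you do not discharge. You concentrate the empirical distribution of the raw vectors $\bar{v}(b)\in[0,1]^{m-p}$ around $\rho$ and then push forward by $\bar{w}\mapsto\msfC(\bar{u},\bar{w})$. Closeness of the raw empirical measure to $\rho$ on a fine grid does not by itself give closeness of the proportions of $\msfC(\bar{u},\bar{v}(b))$ in the cells $J_i$ to $\nu_{\bar{u}}(J_i)$, uniformly in $\bar{u}$: the sets $\msfC(\bar{u},\cdot)^{-1}(J_i)$ need not align with your grid, and $\nu_{\bar{u}}$ may have atoms (e.g.\ if $\msfC(\bar{u},\cdot)$ is constant on a set of positive $\rho$-measure) sitting on cell boundaries, so boundary mass can spoil the transfer. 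The paper avoids this entirely by applying Lemma~\ref{consequence of chernoff and independence} directly to the events $\{\mcA:\mcA(\varphi_{\bar{r}}(\bar{a},b))\in J_i\}$, whose probabilities are exactly the $\alpha_i$. (On the positive side, making the concentration event depend only on the raw $\bar{v}(b)$, hence not on $\bar{u}$, is a reasonable idea, and your observation that $y$ may occur in several argument places of an atom without affecting the independence argument is correct; but neither of these repairs the missing construction of $G$.)
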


\noindent
{\bf Proof.}
By
Lemma~\ref{reduction to simpler formula},
we may assume that $\varphi(\bar{x}, y)$ is a constant or has the form
\begin{equation*}
\msfC(R_1(\bar{x}_1), \ldots, R_t(\bar{x}_t), R_{t+1}(\bar{x}_{t+1}, y), \ldots, R_{t+s}(\bar{x}_{t+s}, y))
\end{equation*}
where $\msfC : [0, 1]^{t+s} \to [0, 1]$ is continuous, for all $i$,
$R_i \in \sigma$, $\rng(\bar{x}_i) \subseteq \rng(\bar{x})$, $|\bar{x}_i| = \nu_{R_i}$ if $i \leq t$
and $|\bar{x}_i| = \nu_{R_i} - 1$ otherwise, and
if $1 \leq i < j \leq t+s$
then $R_i \neq R_j$ or $\bar{x}_i \neq \bar{x}_j$.
The first case, when $\varphi(\bar{x}, y)$ is a constant, is covered by the second case since
$\msfC$ may be a constant function. Therefore we only consider the second case.

Let $\psi(\bar{x}) := F(\varphi(\bar{x}, y) : y)$.
It suffices to find continuous $\msfD : [0, 1]^t \to [0, 1]$
such that if $\theta(\bar{x}) := \msfD(R_1(\bar{x}_1), \ldots, R_t(\bar{x}_t))$ then,
for all $\varepsilon > 0$,
\begin{align}\label{desired property of D}
\lim_{n\to\infty} \mbbP_n\Big( &\big\{\mcA \in \mbW_n : \text{ for all $\bar{a} \in [n]^{|\bar{x}|}$ satisfying $ic(\bar{x})$} \\
& \ |\mcA(\psi(\bar{a})) - \mcA(\theta(\bar{a}))| \leq \varepsilon \big\} \Big) = 1. \nonumber
\end{align}
Fix $\bar{r} = (r_1, \ldots, r_t) \in [0, 1]^t$. 
We now find out what $\msfD(\bar{r})$ should be for~(\ref{desired property of D}) to hold.
First we define $\msfC_{\bar{r}} : [0, 1]^s \to [0, 1]$ by 
\[
\msfC_{\bar{r}}(p_1, \ldots, p_s) = \msfC(r_1, \ldots, r_t, p_1, \ldots, p_s)
\]
so $\msfC_{\bar{r}}$ is continuous.
Let $\varphi_{\bar{r}}(\bar{x}, y) := \msfC_{\bar{r}}(R_{t+1}(\bar{x}_{t+1}, y), \ldots, R_{t+s}(\bar{x}_{t+s}, y))$.
Let $ic'(\bar{x}, y)$ be the (unique up to equivalence) identity constraint that implies $ic(\bar{x})$ and $y \neq x_i$ for all 
$i = 1, \ldots, k$.

Let $M \in \mbbN^+$ and $J_i := [\frac{i}{M}, \frac{i+1}{M}]$ for $i = 0, \ldots, M-1$.
By Lemma~\ref{probability of an aggregation-free formula and J},
for all $i = 0, \ldots, M-1$, there is $\alpha_i$ depending only on $ic'(\bar{x}, y)$
(which depends only on $ic(\bar{x})$), $J_i$ and $\varphi_{\bar{r}}$
(which depends only on $\varphi$ and $\bar{r}$) such that for all $n$,
all tuples $\bar{a} \in [n]^k$ satisfying $ic(\bar{x})$ and all $b \in [n] \setminus \rng(\bar{a})$,
\[
\mbbP_n\big(\big\{\mcA \in \mbW_n : \mcA(\varphi_{\bar{r}}(\bar{a}, b)) \in J_i \big\}\big) = \alpha_i.
\]
For all $n$ and $\delta > 0$
let $\mbX_n^\delta$ be the set of all 
$\mcA \in \mbW_n$ such that
for all tuples $\bar{a} \in [n]^k$ that satisfy $ic(\bar{x})$ and all $i = 0, \ldots, M-1$,
\begin{equation}\label{the proportion of elements in J}
\alpha_i - \delta  \leq 
\frac{\big| \big\{ b \in [n] \setminus \rng(\bar{a}) : \mcA(\varphi_{\bar{r}}(\bar{a}, b)) \in J_i) \big\} \big|}{n - |\rng(\bar{a})|} 
\leq \alpha_i + \delta.
\end{equation}
According to Lemma~\ref{consequence of chernoff and independence},
$\lim_{n\to\infty}\mbbP_n\big(\mbX_n^\delta\big) = 1$.

Let $\varepsilon > 0$.
By assumption, $F$ is a continuous aggregation function.
This means that
$\delta> 0$ and $M, N \in \mbbN$ can be chosen so that if 
$\bar{q} = (	q_1, \ldots, q_n), \bar{q}' = (q'_1, \ldots, q'_m) \in [0, 1]^{<\omega}$
satisfy conditions (2)(a)--(d) of 
Definition~\ref{definition of continuous aggregation function} of continuity of aggregation functions,
then
$|F(\bar{q}) - F(\bar{q}')| \leq \varepsilon$.
Suppose that $\delta$, $M$ and $N$ are such numbers.
Without loss of generality we may assume that, for all $i = 1, \ldots, M-1$, if $\alpha_i > 0$ then $\alpha_i > \delta$.
Without loss of generality we may assume that $N \in \mbbN$ is large enough so that 
if $n \geq N + |\rng(\bar{a})|$ then $\mbX_n^\delta \neq \es$.
Let $n \geq N + |\rng(\bar{a})|$,
$\mcA \in \mbX_n^\delta$, 
and let $\bar{a} \in [n]^k$ satisfy $ic(\bar{x})$.
Enumerate $[n] \setminus \rng(\bar{a})$ as $b_1, \ldots, b_m$ where $m := n-|\rng(\bar{a})|$ so $m \geq N$.
Define $q_i := \mcA(\varphi_{\bar{r}}(\bar{a}, b_i))$ and $\bar{q} := (q_1, \ldots, q_m)$.
Then conditions (2)(a) and (2)(b) of
Definition~\ref{definition of continuous aggregation function}
hold for $\bar{q}$ (because $m \geq N$, $\alpha_i > \delta$ if $\alpha > 0$, and $\mcA \in \mbX_n^\delta$).
 
We now verify that conditions (2)(c) and (2)(d) of 
Definition~\ref{definition of continuous aggregation function}
hold for $\bar{q}$.
For all $j = 1, \ldots, s$ let $ic'_j(\bar{x}_j, y)$ be the restriction of $ic'(\bar{x}, y)$ to $\bar{x}_jy$.
By
Lemma~\ref{probability of an aggregation-free formula and J}
we have that, for all $i = 0, \ldots, M-1$,
\[
\alpha_i =  \int_{Y_i}  \mu_{R_{t+1}}^{ic'_1}(r_1) \cdot \ldots \cdot \mu_{R_{t+s}}^{ic'_s}(r_s) d r_1 \ldots d r_s
 \]
where 
$Y_i = \msfC_{\bar{r}}^{-1}(J_i) \subseteq [0, 1]^s$ and 
$g(r_1, \ldots, r_s) :=   \mu_{R_{t+1}}^{ic'_1}(r_1) \cdot \ldots \cdot \mu_{R_{t+s}}^{ic'_s}(r_s)$ 
is a continuous function, so $Y_i$ is a measurable subset of $[0, 1]^s$.
Note that by the continuity of $g$, $g([0, 1]^s)$ is an interval.
Hence $g([0, 1]^s) \cap J_i$ is an interval for all $i$.
Also, if $\alpha_i = 0$ then $g([0, 1]^s) \cap J_i$ contains at most one point; hence, if $g([0, 1]^s) \cap J_i$ contains more than
one point then $\alpha_i > 0$.
It follows that if $i < j < k$ (or $i > j > k$), $\alpha_i > 0$, $\alpha_j = 0$ and $\alpha_k > 0$ then
$g([0, 1]^s)$ is not an interval, so $g$ is {\it not} continuous.
Thus we conclude that condition (2)(c) of Definition~\ref{definition of continuous aggregation function} holds for $\bar{q}$.
Now suppose that $i < j < k$ (or $i > j > k$) $\alpha_i = \alpha_j = 0$ and $\alpha_k > 0$.
For a contradiction, suppose that $\{l : q_l \in J_i\} \neq \es$.
From $\alpha_k > 0$ it follows that $g([0, 1]^s) \cap J_k \neq \es$.
From $\alpha_j = 0$ it follows that $g([0, 1]^s) \cap J_i$ contains at most one point.
But then $g([0, 1]^s)$ is not an interval, contradicting that $g$ is continuous.
Hence, also condition (2)(d) of 
Definition~\ref{definition of continuous aggregation function} holds for $\bar{q}$.

By the assumptions on $\delta$, $M$ and $N$, 
it follows that there is an open interval $I_\varepsilon \subseteq [0, 1]$
with width (the distance between its endpoints) at most $\varepsilon$ and depending only on 
$\varphi, \bar{r}, M, N$, and $\delta$
such that $F(\bar{q}) \in I_\varepsilon$.
Furthermore, if $\varepsilon > \varepsilon' > 0$ it follows in the same way that 
there are $M' \geq M$, $N' \geq N$, and $0 \leq \delta' \leq \delta$ 
such that if $J'_i$, $\alpha'_i$ and $I_{\varepsilon'}$ are defined as $J_i$, $\alpha_i$, and $I_{\varepsilon}$, but with
$M'$, $N'$ and $\delta'$ in place of $M$, $N$ and $\delta$, respectively, then 
$I_{\varepsilon'} \subset I_\varepsilon$
and $I_{\varepsilon'}$ has width at most $\varepsilon'$.
We now define $D(\bar{r})$ to be the unique member of $\bigcap_{\varepsilon > 0} I_\varepsilon$.
It follows that if $\mcA \in \mbX_n^\delta$ and $R_i^{\mcA}(\bar{a}_i) = r_i$ for $i = 1, \ldots, t$,
then $\mcA(\psi(\bar{a})) = \mcA\big(F(\varphi(\bar{a}, y) : y)\big) =$ 
$\mcA\big(F(\varphi_{\bar{r}}(\bar{a}, y) : y)\big) = F(\bar{q})$
and $\mcA(\theta(\bar{a})) = \msfD(\bar{r})$; hence 
$|\mcA(\psi(\bar{a})) - \mcA(\theta(\bar{a}))| = |F(\bar{q}) - \msfD(\bar{r})| \leq \varepsilon$.
Observe that the definition of $\msfD(\bar{r})$ does not depend on the choice $\mcA$ or $\bar{a} \in [n]^k$
as long as $\mcA \in \mbX_n^\delta$ and $\bar{a}$ satisfies $ic(\bar{x})$, 
because~(\ref{the proportion of elements in J}) holds for all 
$\mcA \in \mbX_n^\delta$ and all $\bar{a} \in [n]^k$ that satisfy $ic(\bar{x})$.

For the continuity of $\msfD$, let $\bar{r}$ be as above and suppose that $\bar{r}' = (r'_1, \ldots, r'_t) \in [0, 1]^t$ and that
$|r_i - r'_i| \leq \delta$ for all $i = 1, \ldots, t$.
Since $\msfC$ is continuous on the compact space $[0, 1]^{t+s}$ it is in fact uniformly continuous.
It follows that, for every $\varepsilon' > 0$, if $\delta$ is small enough and,
for all $i = t+1, \ldots, t+s$, the reals
$r_i, r'_i \in [0, 1]$ are such that $|r_i - r'_i| \leq \delta$, then 
$|\msfC_ {\bar{r}}(r_{t+1}, \ldots, r_{r+s}) - \msfC_ {\bar{r}'}(r'_{t+1}, \ldots, r'_{r+s})| \leq \varepsilon'$.
Let $\varphi_{\bar{r}'}(\bar{x}, y) := \msfC_{\bar{r}'}(R_{t+1}(\bar{x}_{t+1}, y), \ldots, R_{t+s}(\bar{x}_{t+s}, y))$
For $\mcA$, $\bar{a}$ and $b_i$ as above we get
\[
\big|\mcA(\varphi_{\bar{r}}(\bar{a}, b_i)) - \mcA(\varphi_{\bar{r}'}(\bar{a}, b_i))\big| \leq \varepsilon'.
\]
For $i = 1, \ldots, m$ let $q'_i = \mcA(\varphi_{\bar{r}'}(\bar{a}, b_i))$ so we have $|q_i - q'_i| \leq \varepsilon'$ for all $i$.
Let $\bar{q}' = (q'_1, \ldots, q'_m)$.
Then, for every $\varepsilon > 0$, if $\varepsilon'$ is small enough (which can be arranged by taking $\delta$ small enough)
we get $|F(\bar{q}) - F(\bar{q}')| \leq \varepsilon$ because $F$ is continuous.
Given any $\varepsilon > 0$ we can argue as before (for $\bar{q}$ and $\bar{r}$)
to get $|F(\bar{q}') - \msfD(\bar{r}')| \leq \varepsilon$ (given that $\delta$
is small enough).
Then we have $|F(\bar{q}) - F(\bar{q}')| \leq \varepsilon$,  $|F(\bar{q}') - \msfD(\bar{r}')| \leq \varepsilon$, and
 $|F(\bar{q}) - \msfD(\bar{r})| \leq \varepsilon$, which gives $|\msfD(\bar{r}) - \msfD(\bar{r}')| \leq 3\varepsilon$.
Hence $\msfD$ is continuous.
\hfill $\square$

\begin{lem}\label{F of aggregation-free and not aggregation-free}
Let $\bar{x} = (x_1, \ldots, x_k)$ be a sequence of distinct variables and let $y \notin \rng(\bar{x})$ be a variable.
Let $ic(\bar{x}, y)$ be an identity constraint for $\bar{x}y$ which implies that $y \neq x_i$ for all $i = 1, \ldots, k$,
and let $ic'(\bar{x})$ be the restriction of $ic$ to $\bar{x}$.
Suppose that $\varphi(\bar{x}, y), \varphi'(\bar{x}, y) \in CLA$ 
where $\varphi(\bar{x},y)$ and $\varphi'(\bar{x}, y)$ are asymptotically equivalent with respect to $ic(\bar{x}, y)$ and 
$F : [0, 1]^{<\omega} \to [0, 1]$ is a continuous aggregation function.
Then $F(\varphi(\bar{x}, y) : y)$ and $F(\varphi'(\bar{x}, y) : y)$ are asymptotically equivalent with respect to $ic'(\bar{x})$.
\end{lem}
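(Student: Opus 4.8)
The plan is to reduce the statement to the entrywise continuity of $F$ recorded in part~(1) of Definition~\ref{definition of continuous aggregation function}; part~(2) will not be needed, because the two sequences we aggregate will automatically have the same length and be matched coordinatewise by a common enumeration of $[n]\setminus\rng(\bar a)$. First I would fix $\varepsilon>0$ and invoke part~(1) of Definition~\ref{definition of continuous aggregation function}: there is $\delta>0$ such that for every $n$ and all $(q_1,\ldots,q_n),(q'_1,\ldots,q'_n)\in[0,1]^n$ with $|q_i-q'_i|\le\delta$ for all $i$, we have $|F(q_1,\ldots,q_n)-F(q'_1,\ldots,q'_n)|\le\varepsilon$.

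Then I would use the hypothesis that $\varphi(\bar x,y)$ and $\varphi'(\bar x,y)$ are asymptotically equivalent with respect to $ic(\bar x,y)$, applied with $\delta$ in place of $\varepsilon$: the set
\[
\mbX_n^\delta := \big\{\mcA\in\mbW_n : |\mcA(\varphi(\bar c))-\mcA(\varphi'(\bar c))|\le\delta \text{ for all } \bar c\in[n]^{k+1} \text{ satisfying } ic(\bar x,y)\big\}
\]
is measurable (by Remark~\ref{remark about measurability of asymptotic equivalence}, applied with $\bar x y$ in place of $\bar x$) and satisfies $\mbbP_n(\mbX_n^\delta)\to 1$ as $n\to\infty$.

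Next I would show that for all sufficiently large $n$ (it suffices that $n>k$) every $\mcA\in\mbX_n^\delta$ witnesses the conclusion. Fix $\bar a\in[n]^k$ satisfying $ic'(\bar x)$ and enumerate $[n]\setminus\rng(\bar a)$ without repetition as $b_1,\ldots,b_m$, where $m=n-|\rng(\bar a)|\ge n-k\ge 1$. The key bookkeeping point is that each tuple $\bar a b_i$ satisfies $ic(\bar x,y)$: since $ic(\bar x,y)$ implies $y\neq x_j$ for all $j$ and its restriction to $\bar x$ is $ic'(\bar x)$, it is exactly the constraint $ic'(\bar x)\wedge\bigwedge_j y\neq x_j$; as $\bar a$ satisfies $ic'(\bar x)$ and $b_i\notin\rng(\bar a)$ gives $b_i\neq a_j$ for all $j$, the pattern of equalities among the entries of $\bar a b_i$ is precisely the one prescribed by $ic(\bar x,y)$. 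Hence, writing $q_i:=\mcA(\varphi(\bar a,b_i))$ and $q'_i:=\mcA(\varphi'(\bar a,b_i))$, the defining property of $\mbX_n^\delta$ gives $|q_i-q'_i|\le\delta$ for all $i=1,\ldots,m$. Since the same enumeration $b_1,\ldots,b_m$ of $[n]\setminus\rng(\bar a)$ is used to evaluate both aggregated formulas, clause~(5) of Definition~\ref{semantics of CLA} yields $\mcA(F(\varphi(\bar a,y):y))=F(q_1,\ldots,q_m)$ and $\mcA(F(\varphi'(\bar a,y):y))=F(q'_1,\ldots,q'_m)$, and by the choice of $\delta$ from the first paragraph these differ by at most $\varepsilon$. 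As $\bar a$ was an arbitrary tuple satisfying $ic'(\bar x)$, this holds simultaneously for all of them, for this fixed $\mcA$.

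Combining the two previous paragraphs: for every $\varepsilon>0$ and all large enough $n$, the (measurable, by Remark~\ref{remark about measurability of asymptotic equivalence}) set of $\mcA\in\mbW_n$ for which $|\mcA(F(\varphi(\bar a,y):y))-\mcA(F(\varphi'(\bar a,y):y))|\le\varepsilon$ for all $\bar a\in[n]^k$ satisfying $ic'(\bar x)$ contains $\mbX_n^\delta$, hence has probability at least $\mbbP_n(\mbX_n^\delta)\to 1$. By Definition~\ref{definition of asymptotic equivalence} this is exactly the asserted asymptotic equivalence of $F(\varphi(\bar x,y):y)$ and $F(\varphi'(\bar x,y):y)$ with respect to $ic'(\bar x)$; when $\bar x$ is empty the same argument applies verbatim with the references to identity constraints omitted. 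I do not expect a genuine obstacle here: the only points requiring care are the verification that $\bar a b_i$ satisfies $ic(\bar x,y)$ and the realization that part~(1), rather than part~(2), of the continuity definition is the relevant clause because the aggregated sequences are automatically equal-length and coordinatewise matched.
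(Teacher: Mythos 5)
Your proposal is correct and follows essentially the same route as the paper's proof: define the high-probability event where $\varphi$ and $\varphi'$ agree within $\delta$ on all relevant tuples, observe that both aggregates are evaluated over the same enumeration of $[n]\setminus\rng(\bar a)$, and conclude via part~(1) of Definition~\ref{definition of continuous aggregation function}. Your explicit verification that each $\bar a b_i$ satisfies $ic(\bar x,y)$ is a bookkeeping point the paper leaves implicit, but the argument is the same.
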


\noindent
{\bf Proof.}
For all $n \in \mbbN^+$ and $\delta > 0$ let
$\mbY_n^\delta$ be the set of all $\mcA \in \mbW_n$ such that for all $\bar{a} \in [n]^k$ that satisfy $ic'(\bar{x})$
and all $b \in [n] \setminus \rng(\bar{a})$,
\[
\big| \mcA(\varphi(\bar{a}, b)) - \mcA(\varphi'(\bar{a}, b)) \big| \leq \delta.
\]
Since we assume that $\varphi(\bar{x},y)$ and $\varphi'(\bar{x}, y)$ are asymptotically equivalent with respect to $ic(\bar{x}, y)$
it follows that for all $\delta > 0$, $\lim_{n\to\infty} \mbbP_n\big(\mbY_n^\delta\big) = 1$.
Let $\varepsilon > 0$.
It now suffices to show that if $\delta > 0$ is small enough, $n$ large enough, $\mcA \in \mbY_n^\delta$,
and $\bar{a} \in [n]^k$ satisfies $ic'(\bar{x})$, then
\begin{equation}\label{F(of) - F(of) is small}
\big| \mcA\big(F(\varphi(\bar{a}, y) : y)\big) - \mcA\big(F(\varphi'(\bar{a}, y) : y)\big) \big| \leq \varepsilon.
\end{equation}
So take $\mcA \in \mbY_n^\delta$ and $\bar{a} \in [n]^k$ that satisfies $ic'(\bar{x})$ and let
$b_1, \ldots, b_m$ enumerate $[n] \setminus \rng(\bar{a})$.
For $i = 1, \ldots, m$, let $q_i = \mcA(\varphi(\bar{a}, b_i))$ and $q'_i =  \mcA(\varphi'(\bar{a}, b_i))$, and then let
$\bar{q} = (q_1, \ldots, q_{m})$ and $\bar{q}' = (q'_1, \ldots, q'_{m})$.
Then $|q_i - q'_i| \leq \delta$ for all $i$.
Since we assume that $F$ is continuous it follows from 
condition~(1) of Definition~\ref{definition of continuous aggregation function} of continuity
that if $\delta$ is small enough and $n$ large enough then
$|F(\bar{q}) - F(\bar{q}')| \leq \varepsilon$, hence~(\ref{F(of) - F(of) is small}) holds.
\hfill $\square$

\subsection*{Proof of Theorem~\ref{main result}}
We begin with part (a) and prove it by induction on the complexity, or construction,  of $\varphi(\bar{x}) \in CLA$.
If $\varphi(\bar{x})$ has one of the forms 1, 2, or 3, in Definition~\ref{syntax of CLA}
then it is already aggregation-free and asymptotically equivalent to itself with respect to $id(\bar{x})$, 
for every identity constraint $ic(\bar{x})$.
Next, suppose that $\varphi(\bar{x})$ has the form 
$\msfC(\varphi_1(\bar{x}), \ldots, \varphi_k(\bar{x}))$ for some $\varphi_i(\bar{x}) \in CLA$, $i = 1, \ldots, k$,
and continuous $\msfC : [0, 1]^k \to [0, 1]$. 
Let $ic(\bar{x})$ be an identity constraint for $\bar{x}$.
By the induction hypothesis, for $i = 1, \ldots, k$, $\varphi_i(\bar{x})$
is asymptotically equivalent to some aggregation-free $\varphi'_i(\bar{x})$ with respect to $ic(\bar{x})$.
Then $\varphi'(\bar{x}) := \msfC(\varphi'_1(\bar{x}), \ldots, \varphi'_k(\bar{x}))$ is aggregation-free.
Since $\msfC$ is continuous it follows straightforwardly that $\varphi'(\bar{x})$ is asymptotically equivalent to $\varphi(\bar{x})$
with respect to $ic(\bar{x})$.

Now suppose that $\varphi(\bar{x})$ has the form $F(\psi(\bar{x}, y) : y)$ where $F$ is a continuous aggregation function
(and $\psi(\bar{x}, y) \in CLA$).
Let $ic(\bar{x})$ be an identity constraint on $\bar{x}$ and let $ic'(\bar{x}, y)$ be (the unique up to equivalence) 
identity constraint that implies $ic(\bar{x})$ and $y \neq x_i$ for all $i = 1, \ldots, k$.
By the induction hypothesis, $\psi(\bar{x}, y)$ is asymptotically equivalent to some aggregation-free $\psi'(\bar{x}, y)$
with respect to $ic'(\bar{x}, y)$.
By Lemma~\ref{elimination of one aggregation function},
$F(\psi'(\bar{x}, y) : y)$ is asymptotically equivalent to some aggregation-free $\theta(\bar{x})$ with respect to $ic(\bar{x})$.
Lemma~\ref{F of aggregation-free and not aggregation-free}
implies that $F(\psi(\bar{x}, y) : y)$ and $F(\psi'(\bar{x}, y) : y)$ are asymptotically equivalent with respect to $ic(\bar{x})$.
Since asymptotic equivalence (with respect to $ic(\bar{x})$) 
is a transitive relation between formulas (which is easily verified) it follows that
$F(\psi(\bar{x}, y) : y)$ is asymptotically equivalent to $\theta(\bar{x})$ with respect to $ic(\bar{x})$.

Finally we prove part (b) of Theorem~\ref{main result}.
Let $\varphi(\bar{x}) \in CLA$ and let $ic(\bar{x})$ be an identity constraint for $\bar{x}$.
By part~(a), $\varphi(\bar{x})$ is asymptotically equivalent to some aggregation-free $\psi(\bar{x})$.
By Lemma~\ref{reduction to simpler formula}
we may assume that $\psi(\bar{x})$ satisfies the preconditions of 
the formula in
Lemma~\ref{probability of an aggregation-free formula and J},
so that lemma is applicable to $\psi(\bar{x})$.

Let $I = [c, d] \subseteq [0, 1]$ be an interval.
The cases when $c = d$ or when $c=0$ and $d=1$ are trivial, so we assume that $0 < c < d < 1$.
(The cases when $c = 0$ or $d = 1$ are treated similarly, with minor modifications of the definitions of  
$I_\delta^+$ and $I_\delta^-$ below.)
Let $\delta > 0$ be such that $\delta < \min\big((d-c)/2, c/2, (1-d)/2\big)$ and let , 
$I_\delta^+ = [c - \delta, d + \delta]$ and $I_\delta^- = [c + \delta, d - \delta]$.
Lemma~\ref{probability of an aggregation-free formula and J} 
implies that there are $\alpha$, $\alpha_\delta^+$ and $\alpha_\delta^-$ such that for all $n$
and all $\bar{a} \in [n]^{|\bar{x}|}$ that satisfy $ic(\bar{x})$,
\begin{align}\label{the probabilities of psi}
&\mbbP_n\big( \big\{ \mcA \in \mbW_n : \mcA(\psi(\bar{a})) \in I \big\} \big) = \alpha, \\
&\mbbP_n\big( \big\{ \mcA \in \mbW_n : \mcA(\psi(\bar{a})) \in I_\delta^+ \big\} \big) = \alpha_\delta^+, \text{ and} 
\nonumber \\
&\mbbP_n\big( \big\{ \mcA \in \mbW_n : \mcA(\psi(\bar{a})) \in I_\delta^- \big\} \big) = \alpha_\delta^-,
\nonumber
\end{align}
where, for some $s \in \mbbN^+$, $R_1, \ldots, R_s \in \sigma$,
identity constraints $ic_1(\bar{x}_1), \ldots, ic_s(\bar{x}_s)$, and 
continuous $\msfC : [0, 1]^s \to [0, 1]$, 
\begin{align*}
&\alpha_\delta^+ = \int_{\msfC^{-1}(I_\delta^+)} 
\mu_{R_1}^{ic_1}(r_1) \cdot \ldots \cdot \mu_{R_s}^{ic_s}(r_s) d r_1 \ldots d r_s, \text{ and} \\
&\alpha_\delta^- = \int_{\msfC^{-1}(I_\delta^-)} 
\mu_{R_1}^{ic_1}(r_1) \cdot \ldots \cdot \mu_{R_s}^{ic_s}(r_s) d r_1 \ldots d r_s.
\end{align*}
As $I_\delta^- \subseteq I \subseteq I_\delta^+$ we get $\alpha_\delta^- \leq \alpha \leq \alpha_\delta^+$.
Let $\mbX_n^\delta$ be the set of all $\mcA \in \mbW_n$ such that
for all $\bar{a} \in [n]^{|\bar{x}|}$ that satisfy $ic(\bar{a})$, 
$|\mcA(\varphi(\bar{a})) - \mcA(\psi(\bar{a}))| \leq \delta$.
Since for all $n \in \mbbN^+$ and $\bar{a} \in [n]^{|\bar{x}|}$ that satisfy $ic(\bar{a})$,
\[
 \big\{ \mcA \in \mbX_n^\delta : \mcA(\psi(\bar{a})) \in I_\delta^- \big\}  \subseteq 
 \big\{ \mcA \in \mbX_n^\delta : \mcA(\varphi(\bar{a})) \in I \big\}  \subseteq
 \big\{ \mcA \in \mbX_n^\delta : \mcA(\psi(\bar{a})) \in I_\delta^+ \big\}
\]
we must have
\begin{align}\label{inequalities of probabilities}
&\mbbP_n\big( \big\{ \mcA \in \mbX_n^\delta : \mcA(\psi(\bar{a})) \in I_\delta^- \big\} \big)
\leq \\
&\mbbP_n\big( \big\{ \mcA \in \mbX_n^\delta : \mcA(\varphi(\bar{a})) \in I \big\} \big) 
\leq \nonumber \\
&\mbbP_n\big( \big\{ \mcA \in \mbX_n^\delta : \mcA(\psi(\bar{a})) \in I_\delta^+ \big\} \big). 
\nonumber
\end{align}
Since $\lim_{n\to\infty}\mbbP_n\big(\mbX_n^\delta\big) = 1$ it follows 
from~(\ref{the probabilities of psi})
that, for every $\varepsilon > 0$ and
all sufficiently large $n$
and all $\bar{a} \in [n]^{|\bar{x}|}$ that satisfy $ic(\bar{x})$, we have
\begin{align*}
&\mbbP_n\big( \big\{ \mcA \in \mbX_n^\delta : \mcA(\psi(\bar{a})) \in I_\delta^+ \big\} \big) = 
(\alpha_\delta^+ - \varepsilon, \alpha_\delta^+ + \varepsilon), \text{ and} \\
&\mbbP_n\big( \big\{ \mcA \in \mbX_n^\delta : \mcA(\psi(\bar{a})) \in I_\delta^- \big\} \big) = 
(\alpha_\delta^- - \varepsilon, \alpha_\delta^- + \varepsilon).
\end{align*}
This and~(\ref{inequalities of probabilities}) gives, for sufficiently large $n$,
\begin{equation*}
\alpha_\delta^-  - \varepsilon <
\mbbP_n\big( \big\{ \mcA \in \mbX_n^\delta : \mcA(\varphi(\bar{a})) \in I \big\} \big)  <
\alpha_\delta^+ + \varepsilon.
\end{equation*}
Since $\lim_{n\to\infty}\mbbP_n\big(\mbX_n^\delta\big) = 1$ it follows that for all sufficiently large $n$
and all $\bar{a} \in [n]^{|\bar{x}|}$ that satisfy $ic(\bar{x})$,
\begin{equation}\label{almost final inequalities}
\alpha_\delta^-  - 2\varepsilon <
\mbbP_n\big( \big\{ \mcA \in \mbW_n : \mcA(\varphi(\bar{a})) \in I \big\} \big)  <
\alpha_\delta^+ + 2\varepsilon.
\end{equation}
Recall that $\msfC^{-1}(I_\delta^-) \subseteq \msfC^{-1}(I_\delta^+)$ and 
$\alpha_\delta^- \leq \alpha \leq \alpha_\delta^+$.
By taking $\delta > 0$ small enough we can make the measure of
of $\msfC^{-1}(I_\delta^+) \setminus \msfC^{-1}(I_\delta^-)$ as small as we like.
Due to the expressions above of $\alpha_\delta^+$ and $\alpha_\delta^-$ as integrals over
$\msfC^{-1}(I_\delta^+)$ and $\msfC^{-1}(I_\delta^-)$, respectively, it follows that for all $\varepsilon > 0$,
if $\delta > 0$ is sufficiently small then $\alpha_\delta^+ - \alpha_\delta^- < \varepsilon$.
So from~(\ref{almost final inequalities}) we get 
\[
\big| \mbbP_n\big( \big\{ \mcA \in \mbW_n : \mcA(\varphi(\bar{a})) \in I \big\} \big) - \alpha \big| \ \leq \ 4\varepsilon
\]
if $\delta > 0$ is small enough and $n$ is large enough and this completes the proof.
\hfill $\square$

\subsection*{Final remarks}
One can ask if Theorem~\ref{main result} can be generalized in various ways.
For example, can a similar theorem be proved if the $CLA$ is allowed to use more general forms of aggregations
(as in \cite{Jae98a, KW1, KW2}), for
example over tuples of elements, or over elements or tuples that satisfy some condition? 
Another conceivable generalization would be to allow the probability distribution to be generated by some form
of probabilistic graphical model \cite{Koller, BKNP, KMG, RKNP}, as in e.g. \cite{Jae98a, KW1, KW2}, but adapted to real valued relations. 

\subsection*{Acknowledgements}
Vera Koponen thanks the anonymous referee for valuable comments including pointing out a couple of
gaps in the arguments of the initial manuscript which have now been resolved.
Vera Koponen was partially supported by the Swedish Research Council, grant 2023-05238\_VR.

\end{document}